\numberwithin{equation}{section}
\newtheorem{theorem}{Theorem}[section]
\newtheorem{definition}{Definition}[section]
\newtheorem{lemma}{Lemma}[section]
\newtheorem{remark}{Remark}[section]
\newcommand{\8}{\infty}
\newcommand{\el}{\ell}
\newcommand{\be}{\begin{eqnarray*}}
\newcommand{\ee}{\end{eqnarray*}}
\newcommand{\beq}{\begin{equation}}
\newcommand{\eeq}{\end{equation}}
\newcommand{\beqn}{\begin{equation*}}
\newcommand{\eeqn}{\end{equation*}}
\newcommand{\bsp}{\begin{split}}
\newcommand{\esp}{\end{split}}
\begin{document}
\title{Local well-posedness for Gross-Pitaevskii hierarchies}

\thanks{{\it 2010 Mathematics Subject Classification:} 35Q55, 81V70.}
\thanks{{\it Key words:} Gross-Pitaevskii hierarchy; Cauchy problem; local well-posedness.}

\author{Zeqian Chen}

\address{Wuhan Institute of Physics and Mathematics, Chinese Academy of Sciences, West District 30, Xiao-Hong-Shan, Wuhan 430071, China}





\date{}
\maketitle
\markboth{Z. Chen}%
{Gross-Pitaevskii hierarchies}

\begin{abstract}
We consider the Cauchy problem for the Gross-Pitaevskii infinite linear
hierarchy of equations on $\mathbb{R}^n.$ By introducing a (F)-norm in certain Sobolev type spaces of sequences of marginal density matrices, we establish local existence, uniqueness and stability of solutions. Explicit space-time type estimates for the solutions are obtained as well. In particular, this (F)-norm is compatible with the usual Sobolev space norm whenever the initial data is factorized.
\end{abstract}


\section{Introduction}\label{Intro}

The time-dependent Gross-Pitaevskii (GP) equation describes the dynamics of initially trapped Bose-Einstein condensates (cf. \cite{DGPS}). Precisely, in units where $\hbar =1$ and the mass of the bosons $m=1/2,$ the condensate wave function at time $t,$ $\varphi_t = \varphi_t (\mathbf{r}),$ $\mathbf{r} \in \mathbb{R}^3,$ satisfies the GP equation
\beq\label{eq:GPeq}
\mathrm{i} \partial_t \varphi_t = - \Delta_\mathbf{r} \varphi_t + \sigma |\varphi_t |^2 \varphi_t
\eeq
with the normalization $\int | \varphi_t (\mathbf{r}) |^2 d ^3 \mathbf{r} = 1,$ where $\sigma = 8 \pi N a$ is the coupling constant with $N$ being the number of particles and $a$ the scattering length of the interaction potential. This fact was rigorously proved by Erd\"{o}s, Schlein, and Yau in \cite{ESY2007a, ESY2007b, ESY2010}, starting from a many-body bosonic Schr\"{o}dinger equation with a short-scale repulsive interaction in the dilute limit.

One key ingredient in the argument of \cite{ESY2007a, ESY2007b, ESY2010} is to prove the uniqueness of solutions to the following infinite hierarchy (commonly called the Gross-Pitaevskii hierarchy):
\beq\label{eq:GPhierarchy}
\mathrm{i} \partial_t \gamma^{(k)}_{t} = \sum^k_{j =1} [-\triangle_{\mathbf{r}_j}, \gamma^{(k)}_{t} ] + B^{(k)} \gamma^{(k+1)}_{t},\quad k=1,2,\ldots,
\eeq
with the collision operator $B^{(k)} = \sum^k_{j =1} B_{j, k}$ being defined according to
\beq\label{eq:CollisionOper}
B_{j,k} \gamma^{(k+1)} = \mathrm{T r}_{k+1} [\delta (\mathbf{r}_j - \mathbf{r}_{k+1}), \gamma^{(k+1)} ],\quad j=1,\ldots, k,
\eeq
where $\gamma^{(k)}$ denotes $k$-partite density operator in $\mathbb{R}^3,$ i.e., a nonnegative operator on $L^2 (\mathbb{R}^3)^{\hat{\otimes}^k}$ with $\mathrm{Tr} \gamma^{(k)} =1.$ The uniqueness result of Erd\"{o}s, Schlein, and Yau states that: Given a family of densities $\Gamma = (\gamma^{(k)})_{k \ge 1}$ such that
\beq\label{eq:ESYnorm}
||| \gamma^{(k)} |||_k : =  \mathrm{T r} \big [ |(1- \Delta_{\mathbf{r}_1})^{\frac{1}{2}}\cdots (1- \Delta_{\mathbf{r}_k})^{\frac{1}{2}} \gamma^{(k)} (1- \Delta_{\mathbf{r}_1})^{\frac{1}{2}}\cdots (1- \Delta_{\mathbf{r}_k})^{\frac{1}{2}} | \big ] \le C^k,\quad \forall k \ge 1,
\eeq
for some constant $C>0,$ there exists at most one solution $\Gamma_t = (\gamma^{(k)}_t)_{k \ge 1}$ to \eqref{eq:GPhierarchy} with $\Gamma_{t=0} = \Gamma$ such that $||| \gamma^{(k)}_t |||_k \le C^k$ holds uniformly in $t.$

This motivates that one may begin directly with the GP hierarchy \eqref{eq:GPhierarchy} and investigates the associated initial problem on $\mathbb{R}^n.$ Recently, T. Chen and N. Pavlovi\'{c} \cite{CP2010} indeed started to investigate the Cauchy problem for the GP hierarchy \eqref{eq:GPhierarchy} without the assumption of factorized initial conditions. By introducing certain Sobolev type spaces $\mathcal{H}^{\alpha}_{\xi}$ of sequences of marginal density matrices with the parameter $0 < \xi <1,$ they proved local existence and uniqueness of solutions with an additional restriction on the interaction terms $(B^{(k)} \gamma^{(k+1)}_{t})_{k \ge 1}.$ Furthermore, Liu and the present author \cite{CL2011} proved the unconditional local wellposedness in the cas $\alpha > n/2.$ Their argument is also suitable to obtain the existence result in general case without assuming the additional condition used in \cite{CP2010} (see e.g. Theorem \ref{th:LocalWellposued-alpha>(n-1)/2} below). However, there appear two different parameters in these results, i.e., for the initial data in $\mathcal{H}^{\alpha}_{\xi_1}$ the solution lies in $\mathcal{H}^{\alpha}_{\xi_2}$ for some $0 < \xi_2 < \xi_1.$ In particular, for each fixed $\xi$ the norm $\| \cdot \|_{\mathcal{H}^{\alpha}_{\xi}}$ is not compatible with the Sobolev norm of $\mathrm{H}^{\alpha}$ for factorized hierarchies $\Gamma = \{\gamma^{(k)} \}_{k \ge 1}$ with
\beq\label{eq:FactorHiearchy}
\gamma^{(k)} = | \varphi \rangle \langle \varphi |^{\otimes^k},\quad k=1,2,\ldots, \quad \varphi \in \mathrm{H}^{\alpha}.
\eeq
This indicates that the Cauchy problem \eqref{eq:GPhierarchy} in $\mathcal{H}^{\alpha}_{\xi}$ is not equivalent to the one \eqref{eq:GPeq} in $\mathrm{H}^{\alpha}$ for the case that the initial condition is factorized.

In this paper we will eliminate these two undesirable issues. As a crucial ingredient of the arguments, a (F)-norm is introduced in certain Sobolev type spaces of sequences of marginal density matrices, i.e., a (F)-normed space $\mathcal{H}^{\alpha}$ consisting of all hierarchies $\Gamma = ( \gamma^{(k)} )_{k \ge 1}\in \bigotimes_{k=1}^{\infty} \mathrm{H}^{\alpha}_k$ such that
\be
\sum_{k=1}^{\infty} \frac{1}{\lambda^k} \|\gamma^{(k)}\|_{\mathrm{H}^{\alpha}_k} < \infty \;\text{for some}\; \lambda>0,
\ee
where $\mathrm{H}^{\alpha}_k = \mathrm{H}^{\alpha} (\mathbb{R}^{ k n} \times \mathbb{R}^{k n}),$ equipped with the (F)-norm
\be
\big \| \Gamma \big \|_{\mathcal{H}^{\alpha}} : = \inf \bigg \{ \lambda>0:\; \sum_{k=1}^{\infty} \frac{1}{\lambda^k} \|\gamma^{(k)}\|_{\mathrm{H}^{\alpha}_k} \le 1 \bigg \}.
\ee
Our main result in this paper is that the Cauchy problem of the GP hierarchy \eqref{eq:GPHierarchyEquaFunct} is locally well posed in $\mathcal{H}^{\alpha}$ for $\alpha > \max \{1/2, (n-1)/2\}.$

Note that the (F)-norm $\big \| \Gamma \big \|_{\mathcal{H}^{\alpha}}$ agrees with the condition \eqref{eq:ESYnorm} to a certain extent. Indeed, let $\mathfrak{H}^1$ be the space of sequences $\Gamma = (\gamma^{(k)})$ of trace class operators satisfying
\be
\| \Gamma \|_{\mathfrak{H}^1} : = \inf \bigg \{ \lambda>0:\; \sum_{k=1}^{\infty} \frac{1}{\lambda^k} |||\gamma^{(k)}|||_k \le 1 \bigg \} < \8.
\ee
It is easy to check that $\Gamma = (\gamma^{(k)})_{k \ge 1}$ satisfies \eqref{eq:ESYnorm} if and only if $\| \Gamma \|_{\mathfrak{H}^1} < \8.$ Then the uniqueness theorem of Erd\"{o}s, Schlein, and Yau stated above can be reformulated as follows: {\it The initial problem for the GP hierarchy \eqref{eq:GPhierarchy} admits at most one solution of density operators in $\mathfrak{H}^1.$} On the other hand, one can easily verify that $\| \Gamma \|_{\mathcal{H}^{\alpha}} = 2 \| \varphi \|^2_{\mathrm{H}^{\alpha}}$ for all factorized hierarchies $\Gamma$ of the form \eqref{eq:FactorHiearchy}, i.e., the (F)-norm $\| \cdot \|_{\mathcal{H}^{\alpha}}$ is compatible with $\|\cdot \|_{\mathrm{H}^{\alpha}}$ for any factorized hierarchy. This explains the reason why we introduce the parameter-free space $\mathcal{H}^{\alpha}$ in place of $\mathcal{H}^{\alpha}_{\xi}$ for studying the Cauchy problem of the GP hierarchy \eqref{eq:GPhierarchy}.

The paper is organized as follows. In Section \ref{PreMainresult}, some notations and the main results are presented. In Sections \ref{PfLocalWellposued1} we will prove unconditional local wellposedness in $\mathcal{H}^{\alpha}$ for the Cauchy problem of the hierarchy \eqref{eq:GPhierarchy} for any $\alpha > n/2.$ We further prove, in Section \ref{PfLocalWellposued2}, the local wellposedness with a priori assumption on  $(B^{(k)} \gamma^{(k+1)}_{t})_{k \ge 1}$ in both the stability and uniqueness parts. Instead of using a fixed point principle as in \cite{CP2010}, here we use the fully expanded iterated Duhamel series and a Cauchy convergence criterion involved in \cite{CL2011}. Finally, in Section \ref{GPHierarchyQuintic}, we will extend the result obtained for the cubic GP hierarchy to the so-called quintic one.

\section{Preliminaries and statement of the main result}\label{PreMainresult}

In what follows, we usually denote by $x= (x^1, \ldots, x^n)$ a general variable in $\mathbb{R}^n$ and by $\mathbf{x}_k=(x_1,\ldots, x_k)$ a point in $\mathbb{R}^{k n} = ( \mathbb{R}^n )^k.$ For any $x , y \in \mathbb{R}^n$ we denote by $x \cdot y = \sum^n_{i=1} x^i y^i$ and $|x|^2 = x \cdot x.$ For any $\mathbf{x}_k, \mathbf{y}_k \in \mathbb{R}^{k n},$ we set $\langle \mathbf{x}_k, \mathbf{y}_k \rangle: = \sum^k_{j=1} x_j \cdot y_j.$

We consider the Cauchy problem (initial value problem) for the Gross-Pitaevskii infinite linear
hierarchy of equations on $\mathbb{R}^n,$ in terms of kernel functions, of the form
\begin{equation}\label{eq:GPHierarchyEquaFunct}
\left \{ \begin{split}
& \big ( \mathrm{i} \partial_t + \triangle^{(k)}_{\pm} \big ) \gamma^{(k)}_t ({\bf x}_k;{\bf x}'_k) = \mu \big [ B^{(k)} (\gamma^{(k+1)}_t )\big ] ({\bf x}_k; {\bf x}'_k ),\\
& \gamma^{(k)}_{t=0}({\bf x}_k;{\bf x}'_k) = \gamma^{(k)}_{0}({\bf x}_k;{\bf x}'_k), \quad k =1, 2, \ldots, \end{split} \right.
\end{equation}
where $t \in \mathbb{R}, {\bf x}_k = (x_1, x_2, \ldots, x_k), {\bf x}'_k = (x'_1, x'_2, \ldots, x'_k) \in \mathbb{R}^{k n}, \mu= \pm 1.$ Here,
\be
\triangle^{(k)}_{\pm} = \sum^k_{j=1} ( \Delta_{x_j} - \Delta_{x'_j} )\quad \text{and}\quad B^{(k)} = \sum^k_{j=1} B_{j,k},
\ee
where $\Delta_{x_j}$ refers to the usual Laplace operator with respect to the variables $x_j \in {\mathbb R}^n$ and the operators $B_{j,k} = B_{j,k}^+ - B_{j,k}^-$ are defined according to
\begin{equation*}
\begin{split}
[ B_{j,k}^+ & (\gamma^{(k+1)})]  ({\bf x}_{k},{\bf x}'_{k})\\
= & \int \mathrm{d} x_{k+1} \mathrm{d}x'_{k+1}
\delta(x_{k+1}-x'_{k+1}) \delta(x_j- x_{k+1}) \gamma^{(k+1)}({\bf
x}_{k+1},{\bf x}'_{k+1}),
\end{split}
\end{equation*}
and
\begin{equation*}
\begin{split}
[B_{j,k}^- & (\gamma^{(k+1)}) ] ({\bf x}_{k},{\bf x}'_{k})\\
= & \int \mathrm{d} x_{k+1} \mathrm{d} x'_{k+1}
\delta(x_{k+1}-x'_{k+1} ) \delta(x'_j - x_{k+1})\gamma^{(k+1)}({\bf
x}_{k+1},{\bf x}'_{k+1}).
\end{split}
\end{equation*}
A sequence of functions $\Gamma (t) = ( \gamma^{(k)}_t )_{k \ge 1}$ is said to be a Gross-Pitaevskii (GP, in short) hierarchy, if they satisfy \eqref{eq:GPHierarchyEquaFunct} and are symmetric, in the sense that
\be
\gamma^{(k)}_t ({\bf x}_k, {\bf x}'_k) = \overline{\gamma^{(k)}_t ({\bf x}'_k, {\bf x}_k)}
\ee
and
\be
\gamma^{(k)}_t (x_1,\dotsc,x_k;x'_1,\dotsc,x'_k)= \gamma^{(k)}_t (x_{\sigma(1)},\dotsc,x_{\sigma(k)};x'_{\sigma(1)},\dotsc,x'_{\sigma(k)})
\ee
for any $\sigma \in \Pi_k$ ($\Pi_k$ denotes the set of permutations on $k$ elements).

Let $\varphi \in \mathrm{H}^1(\mathbb{R}^n),$ then one can easily verify that a particular GP hierarchy, i.e., a solution to \eqref{eq:GPHierarchyEquaFunct} with factorized initial datum
\be
\gamma^{(k)}_{t=0}({\bf x}_k; {\bf x}'_k) = \prod^k_{j=1} \varphi(x_j) \overline{\varphi(x'_j)},\quad k=1,2,\ldots,
\ee
is given by
\be
\gamma^{(k)}_t ({\bf x}_{k}; {\bf x}'_{k} ) = \prod^k_{j=1} \varphi_t (x_j) \overline{\varphi_t ( x'_j )},\quad  k=1,2,\ldots,
\ee
where $\varphi_t$ satisfies the cubic non-linear Schr\"odinger equation
\beq\label{eq:GPEqua}
\mathrm{i} \partial_t \varphi_t = -\Delta \varphi_t + \mu |\varphi_t|^2 \varphi_t,\quad \varphi_{t=0}=\varphi,
\eeq
which is {\it defocusing} if $\mu =1,$ and {\it focusing} if $\mu= -1.$ We refer to \cite{C2003} and references therein for the nonlinear Schr\"odinger equation.

In the following, unless otherwise specified, we always use $\gamma^{(k)}, \rho^{(k)}$ for denoting symmetric functions in $\mathbb{R}^{k n} \times \mathbb{R}^{k n}.$ For $k \geq 1$ and $\alpha \in \mathbb{R},$ we denote by $\mathrm{H}^{\alpha}_k = \mathrm{H}^{\alpha} (\mathbb{R}^{ k n} \times \mathbb{R}^{k n})$ the space of measurable functions $\gamma^{(k)} = \gamma^{(k)} ( {\bf x}_k, {\bf x}'_k )$ in $L^2(\mathbb{R}^{k n} \times \mathbb{R}^{k n}),$ which are symmetric, such that
\beq\begin{split}\label{normH}
\| \gamma^{(k)} \|_{\mathrm{H}^{\alpha}_k} : = \| S_k^{\alpha} \gamma^{(k)} \|_{L^2(\mathbb{R}^{k n} \times \mathbb{R}^{k n})} < \8,
\end{split}\eeq
where
\be\begin{split}
S_k^{\alpha} :  = \bigg ( \prod_{j=1}^k  (1 - \Delta_{x_j} )^{\frac{1}{2}} (1 - \Delta_{x'_j} )^{\frac{1}{2}} \bigg )^{\alpha}.
\end{split}\ee
Evidently, $\mathrm{H}^{\alpha}_k$ is a Hilbert space with the inner product
\be\begin{split}
\langle \gamma^{(k)}, \rho^{(k)} \rangle = \big \langle S_k^{\alpha} \gamma^{(k)},\; S_k^{\alpha} \rho^{(k)} \big \rangle_{L^2(\mathbb{R}^{k n} \times \mathbb{R}^{k n})}.
\end{split}\ee
Moreover, the norm $\|\cdot\|_{\mathrm{H}^{\alpha}_k}$ is invariance under the action of $e^{\mathrm{i} t \triangle^{(k)}},$ i.e.,
\be
\| e^{\mathrm{i} t \triangle^{(k)}} \gamma^{(k)} \|_{\mathrm{H}^{\alpha}_k} = \| \gamma^{(k)} \|_{\mathrm{H}^{\alpha}_k}
\ee
because $e^{\mathrm{i} t \triangle^{(k)}}$ commutates with $\Delta_{x_j}$ for any $j.$

\begin{definition}\label{df:SobolevSpace}
For $\alpha \in \mathbb{R}$ we define
\be
\mathcal{H}^{\alpha} = \bigg \{ ( \gamma^{(k)} )_{k \ge 1} \in \bigotimes_{k=1}^{\infty} \mathrm{H}^{\alpha}_k :\; \sum_{k=1}^{\infty} \frac{1}{\lambda^k} \|\gamma^{(k)}\|_{\mathrm{H}^{\alpha}_k} < \infty \;\text{for some}\; \lambda > 0 \bigg \},
\ee
equipped with the $\mathrm{(F)}$-norm
\beq\label{eq:SobolevSpaceNorm}
\big \| ( \gamma^{(k)} )_{k \ge 1} \big \|_{\mathcal{H}^{\alpha}} : = \inf \bigg \{ \lambda > 0:\; \sum_{k=1}^{\infty} \frac{1}{\lambda^k} \|\gamma^{(k)}\|_{\mathrm{H}^{\alpha}_k} \le 1 \bigg \}.
\eeq
\end{definition}

\begin{remark}\label{rk:SobolevSpace}\rm
Note that $\| \cdot \|_{\mathcal{H}^{\alpha}}$ is not actually a norm in the sense that it does not satisfy $\| \lambda \Gamma \|_{\mathcal{H}^{\alpha}} = |\lambda| \| \Gamma \|_{\mathcal{H}^{\alpha}}$ in general. However, it is a (F)-norm, that is,
\begin{enumerate}[{\rm (F1)}]

\item $\| \Gamma \|_{\mathcal{H}^{\alpha}} \ge 0;$

\item $\| \Gamma \|_{\mathcal{H}^{\alpha}} = 0$ if and only if $\Gamma =0;$

\item $\| \lambda \Gamma \|_{\mathcal{H}^{\alpha}} \le \| \Gamma \|_{\mathcal{H}^{\alpha}}$ if $| \lambda | \le 1;$

\item $\|\Gamma + \Theta \|_{\mathcal{H}^{\alpha}} \le \|\Gamma \|_{\mathcal{H}^{\alpha}}+ \| \Theta \|_{\mathcal{H}^{\alpha}};$

\item $\| \lambda \Gamma_n \|_{\mathcal{H}^{\alpha}} \to 0$ if $\| \Gamma_n \|_{\mathcal{H}^{\alpha}} \to 0;$

\item $\| \lambda_n \Gamma \|_{\mathcal{H}^{\alpha}} \to 0$ if $\lambda_n \to 0.$
\end{enumerate}
The conditions (F1)-(F3) are clear from the definition. (F4) results from the fact that $\mathrm{H}^{\alpha}_k$'s are all norms, while (F5) and (F6) follow easily from Lebesgue's dominated convergence theorem and (F1)-(F4). Moreover, it can be checked that $\mathcal{H}^{\alpha}$ is complete under the metric determined by this (F)-norm. Therefore, $\mathcal{H}^{\alpha}$ is a complete (F)-normed space. We refer to \cite[\S 15.11]{Kothe1983} for a detailed account of (F)-norms.
\end{remark}

\begin{definition}\label{df:SpacetimeSobolevSpace}
For an interval $I \subset \mathbb{R},$ we define $L_{t\in I}^1\mathcal{H}^{\alpha}$ to be the space of all strongly measurable functions $\Gamma(t)=\{\gamma^{(k)}_t\}_{k\geq 1}$ on $I$ with values in $\mathcal{H}^{\alpha}$ such that
\begin{equation*}
\sum_{k=1}^{\infty} \frac{1}{\lambda^{k}}\int_I \|\gamma^{(k)}_t\|_{\mathrm{H}_k^{\alpha}}dt<\infty\quad \text{for some} \ \lambda>0,
\end{equation*}
equipped with the $\mathrm{(F)}$-norm
\begin{equation}\label{ndef2}
\|\Gamma(t)\|_{L_{t \in I}^1 \mathcal{H}^{\alpha}}: = \inf \bigg \{ \lambda>0:  \sum_{k=1}^{\infty} \frac{1}{\lambda^{k}}\int_I \|\gamma^{(k)}_t\|_{\mathrm{H}_k^{\alpha}} d t \leq 1 \bigg \}.
\end{equation}
\end{definition}

\begin{remark}\label{rk:SobolevtimeSpace}\rm
It follows easily from the Lebesgue dominated convergence theorem that $L_{t\in I}^1\mathcal{H}^{\alpha}$ is a complete (F)-normed space. See Remark \ref{rk:SobolevSpace} above.
\end{remark}

Recall that, in integral formulation, \eqref{eq:GPHierarchyEquaFunct} can be written as
\beq\label{eq:GPHierarchyFunctIntEqua}
\gamma^{(k)}_t = e^{\mathrm{i} t \triangle^{(k)}} \gamma^{(k)}_0 + \int^{t}_{0} d s\;  e^{\mathrm{i} (t-s) \triangle^{(k)}} \tilde{B}^{(k)} \gamma^{(k+1)}_s,\; k=1,2,\ldots\;,
\eeq
whereafter $\tilde{B}^{(k)} = - \mathrm{i} \mu B^{(k)}.$ As noted in \cite{CP2010}, such a solution can be obtained by solving the following infinite linear hierarchy of integral equations
\beq\label{eq:GPStrongSolutionDuh-kth}\begin{split}
\tilde{B}^{(k)} \gamma^{(k+1)}_t & = \tilde{B}^{(k)} e^{\mathrm{i} t \triangle^{(k+1)}} \gamma^{(k+1)}_0 + \int^{t}_{0} d s\, \tilde{B}^{(k)} e^{\mathrm{i} (t-s) \triangle^{(k+1)}} \tilde{B}^{(k+1)} \gamma^{(k+2)}_s,
\end{split}\eeq
for any $k \ge 1.$ If we write
\be
\hat{\triangle} \Gamma : = \big ( \triangle^{(k)} \gamma^{(k)} \big )_{k \ge 1} \quad \text{and} \quad \hat{B} \Gamma := \big ( \tilde{B}^{(k)} \gamma^{(k+1)} \big )_{k \ge 1},
\ee
then \eqref{eq:GPHierarchyFunctIntEqua} and \eqref{eq:GPStrongSolutionDuh-kth} can be written as
\beq\label{eq:GPHierarchyFunctIntEquaGamma}
\Gamma (t) = e^{\mathrm{i} t \hat{\triangle}} \Gamma_0 + \int^{t}_{0} d s~  e^{\mathrm{i} (t-s) \hat{\triangle}} \hat{B} \Gamma (s)
\eeq
and
\beq\label{eq:GPStrongSolutionFunctGamma}\begin{split}
\hat{B} \Gamma (t) = \hat{B} e^{\mathrm{i} t \hat{\triangle}} \Gamma_0 + \int^{t}_{0} d s\, \hat{B} e^{\mathrm{i} (t-s) \hat{\triangle}} \hat{B} \Gamma (s),
\end{split}\eeq
respectively.

Let us make the notion of solution more precise.

\begin{definition}\label{df:StrongSolution}
A function $\Gamma (t) = ( \gamma^{(k)}_t )_{k \geq 1}: I \mapsto \mathcal{H}^{\alpha}$ on a non-empty time interval $0 \in I \subset \mathbb{R}$ is said to be a local $(strong)$ solution
to the Gross-Pitaevskii hierarchy \eqref{eq:GPHierarchyEquaFunct} if it lies in the class $C (K, \mathcal{H}^{\alpha})$ for all compact sets $K \subset I$ and obeys the Duhamel formula
\beq\label{eq:MildSolution}
\gamma^{(k)}_t = e^{\mathrm{i} t \triangle^{(k)}} \gamma^{(k)}_0 - \mathrm{i} \mu \int^{t}_{0} d s\; e^{\mathrm{i} (t-s) \triangle^{(k)}} B^{(k)} \gamma^{(k+1)}_s,\quad \forall t \in I,
\eeq
holds in $\mathrm{H}^{\alpha}_k$ for every $k=1,2,\ldots.$
\end{definition}

We refer to the interval $I$ as the {\it lifespan} of $\Gamma (t)$ with the initial value $\Gamma (0) = ( \gamma^{(k)}_0 )_{k \geq 1}.$ We say that $\Gamma (t)$ is a {\it maximal-lifespan solution} on $I = (- T_{\mathrm{min}}, T_{\mathrm{max}})$ with $T_{\mathrm{max}} = T_{\mathrm{max}} (\Gamma_0) \in (0, \8]$ and $T_{\mathrm{min}} = T_{\mathrm{min}} (\Gamma_0) \in (0, \8]$ if the solution cannot be extended to any strictly larger interval. $(- T_{\mathrm{min}}, T_{\mathrm{max}})$ is said to be the {\it maximal lifespan} of $\Gamma (t)$ with the initial value $\Gamma (0) = ( \gamma^{(k)}_0 )_{k \geq 1}.$ We say that $\Gamma (t)$ is a {\it global} solution if $(- T_{\mathrm{min}}, T_{\mathrm{max}}) = \mathbb{R},$ i.e., $T_{\mathrm{max}} = T_{\mathrm{min}} = \8.$

\begin{remark}\label{rk:Factoriedcase}\rm
Let $\varphi \in \mathrm{H}^{\alpha}(\mathbb{R}^n)$ and set for $k \ge 1,$
\be
\gamma^{(k)}_0 ({\bf x}_k; {\bf x}'_k) = \prod^k_{j=1} \varphi(x_j) \overline{\varphi(x'_j)}.
\ee
An immediate computation yields that
\be
\big \| ( \gamma^{(k)}_0 )_{k \ge 1} \big \|_{\mathcal{H}^{\alpha}} = 2 \| \varphi \|^2_{\mathrm{H}^{\alpha} (\mathbb{R}^n)}.
\ee
Thus, for $T>0,$ $\varphi_t \in C((-T,T), \mathrm{H}^{\alpha})$ is a solution to \eqref{eq:GPEqua} with the initial value $\varphi_t |_{t=0} = \varphi$ if and only if \beq\label{eq:FactorGPHierarchy}
\Gamma (t) = ( \gamma^{(k)}_t )_{k \ge 1} \; \text{with}\; \gamma^{(k)}_t({\bf x}_k; {\bf x}'_k) = \prod^k_{j=1} \varphi_t(x_j) \overline{\varphi_t(x'_j)}
\eeq
is a solution to \eqref{eq:GPHierarchyEquaFunct} in $C((-T,T), \mathcal{H}^{\alpha})$ with $\Gamma (0) = ( \gamma^{(k)}_0 )_{k \ge 1}.$
This yields that the Cauchy problem \eqref{eq:GPHierarchyEquaFunct} in $\mathcal{H}^{\alpha}$ is equivalent to the one \eqref{eq:GPEqua}
in $\mathrm{H}^{\alpha}$ for the case of that initial conditions are factorized.
\end{remark}

The aim of this paper is to prove the local well-posedness of the GP hierarchy \eqref{eq:GPHierarchyEquaFunct} in $\mathcal{H}^{\alpha}$ for $\alpha > \max \{1/2,\; (n-1)/2 \}.$ We state our results as Theorems \ref{th:LocalWellposued-alpha>n/2} and \ref{th:LocalWellposued-alpha>(n-1)/2}.

\begin{theorem}\label{th:LocalWellposued-alpha>n/2}
Assume that $n \geq 1$ and $\alpha > n/2.$ The Cauchy problem \eqref{eq:GPHierarchyEquaFunct} is locally well posed. More precisely, there exists a constant $A_{n, \alpha}>0$ depending only on $n$ and $\alpha$ such that
\begin{enumerate}[{\rm (i)}]

\item For each $\Gamma_0 = ( \gamma^{(k)}_{0} )_{k \geq 1}\in \mathcal{H}^{\alpha},$ let $I = [-T, T]$ with $T =\frac{A_{n, \alpha}}{\| \Gamma_0 \|_{\mathcal{H}^{\alpha}}}.$ Then there exists a solution $\Gamma (t) = ( \gamma^{(k)}_t )_{k \geq 1} \in C ( I, \mathcal{H}^{\alpha})$ to the Gross-Pitaevskii hierarchy \eqref{eq:GPHierarchyEquaFunct} with the initial data $\Gamma_0$ such that
\begin{equation}\label{eq:SpacetimeEstimate-LocalWellposued1}
\| \Gamma (t) \|_{C( I,\mathcal{H}^{\alpha})} \leq 2 \| \Gamma_0 \|_{\mathcal{H}^{\alpha}}.
\end{equation}

\item Given $I_0 = [-T_0, T_0]$ with $T_0 > 0,$ if $\Gamma (t)$ and $\Gamma' (t)$ in $C ( I_0, \mathcal{H}^{\alpha})$ are two solutions to \eqref{eq:GPHierarchyEquaFunct} with initial conditions $\Gamma_{t=0} = \Gamma_0$ and $\Gamma'_{t=0} = \Gamma'_0$ in $\mathcal{H}^{\alpha}$ respectively, then
\begin{equation}\label{eq:SpacetimeEstimate-Stability}
\| \Gamma (t) - \Gamma' (t) \|_{C( I, \mathcal{H}^{\alpha})} \leq 2 \| \Gamma_0 - \Gamma'_0  \|_{\mathcal{H}^{\alpha}},
\end{equation}
with $I = [-T, T],$ where
\be
T = \min \left \{T_0, \frac{A_{n, \alpha}}{\| \Gamma (t) - \Gamma' (t) \|_{C ( I_0; \mathcal{H}^{\alpha})}} \right \}.
\ee
\end{enumerate}

\end{theorem}

\begin{remark}\label{rk:thalpha>n/2}\rm
Theorem \ref{th:LocalWellposued-alpha>n/2} shows that one has unconditional local wellposedness in $\mathcal{H}^{\alpha}$ for the Cauchy problem of the GP hierarchy \eqref{eq:GPHierarchyEquaFunct} for any $\alpha > n/2.$ This agrees with the case of the GP equation \eqref{eq:GPEqua} (cf. \cite[Proposition 3.8]{Tao2006}). Using the classical argument (see the proof of \cite[Theorem 1.17]{Tao2006}), one can easily shows that given any $\Gamma_0 \in \mathcal{H}^{\alpha},$ there exists a unique maximal interval of existence $I$ and a unique solution $\Gamma_t \in C (I, \mathcal{H}^{\alpha}).$ Moreover, if $I$ has a finite endpoint $T,$ i.e., $T = T_{\max} < \8$ or $T = T_{\min} <\8,$ then the $\mathcal{H}^{\alpha}$-norm of $\Gamma_t$ will go to infinity as $t \to T.$ Thus, the maximal lifespan of $\Gamma_t$ is necessarily open.
\end{remark}

\begin{theorem}\label{th:LocalWellposued-alpha>(n-1)/2}
Assume that $n \ge 2$ and $\alpha > (n-1)/2.$ Then, the Cauchy problem for the Gross-Pitaevskii hierarchy \eqref{eq:GPHierarchyEquaFunct} is locally well posed in $\mathcal{H}^{\alpha}.$ More precisely, there exist an absolute constant $A>2$ and a constant $C=B_{n, \alpha}>0$ depending only on $n$ and $\alpha$ such that
\begin{enumerate}[{\rm (1)}]

\item For every $\Gamma_0 = ( \gamma^{(k)}_{0} )_{k \geq 1} \in \mathcal{H}^{\alpha},$ let $I = [-T, T]$ with $T = \frac{B_{n, \alpha}}{\| \Gamma_0 \|^2_{\mathcal{H}^{\alpha}}}.$ Then there exists a solution $\Gamma (t) = ( \gamma^{(k)} (t) )_{k \geq 1} \in C ( I, \mathcal{H}^{\alpha})$ to \eqref{eq:GPHierarchyEquaFunct} with the initial data $\Gamma (0) = \Gamma_0$ satisfying
\begin{equation}\label{eq:SpacetimeEstimate}
\| \hat{B} \Gamma (t) \|_{L^1_{t \in I} \mathcal{H}^{\alpha}} \leq 4 A \| \Gamma_0 \|_{\mathcal{H}^{\alpha}}.
\end{equation}

\item Given $I_0 = [-T_0, T_0]$ with $T_0 >0,$ if $\Gamma (t) \in C ( I_0, \mathcal{H}^{\alpha})$ so that $\hat{B} \Gamma (t) \in L^1_{t \in I_0} \mathcal{H}^{\alpha}$ is a solution to \eqref{eq:GPHierarchyEquaFunct} with the initial data $\Gamma (0) = \Gamma_0,$ then \eqref{eq:SpacetimeEstimate} holds as well for $I = [-T, T],$ where
\be
T = \min \bigg \{T_0, \; \frac{B_{n, \alpha}}{ \| \hat{B} \Gamma (t) \|^2_{L^1_{t \in I_0} \mathcal{H}^{\alpha}} + \| \Gamma_0 \|^2_{\mathcal{H}^{\alpha}} } \bigg \}.
\ee

\item Given $I_0 = [-T_0, T_0]$ with $T_0 >0,$ if $\Gamma (t)$ and $\Gamma' (t)$ in $C ( I_0, \mathcal{H}^{\alpha})$ with $\hat{B} \Gamma (t), \hat{B} \Gamma' (t) \in L^1_{t \in I_0}\mathcal{H}^{\alpha}$ are two solutions to \eqref{eq:GPHierarchyEquaFunct} with initial conditions $\Gamma (0) = \Gamma_0$ and $\Gamma' (0) = \Gamma'_0$ in $\mathcal{H}^{\alpha}$ respectively, then
\begin{equation}\label{eq:InitialValueContinuousDependence}
\| \Gamma (t) - \Gamma' (t) \|_{C( I, \mathcal{H}^{\alpha})} \leq (1+ 4 A) \| \Gamma_0 - \Gamma'_0\|_{\mathcal{H}^{\alpha}},
\end{equation}
with $I =[-T, T],$ where
\be
T = \min \bigg \{ T_0, \; \frac{B_{n, \alpha}}{ \| \hat{B} [\Gamma (t) - \Gamma' (t) ]\|^2_{L^1_{t \in I_0} \mathcal{H}^{\alpha}} + \| \Gamma_0 - \Gamma'_0 \|^2_{\mathcal{H}^{\alpha}} } \bigg \}.
\ee

\end{enumerate}

In particular, the above results hold for $\mathcal{H}^1$ in the case $n=3.$
\end{theorem}

\begin{remark}\label{rk:thalpha>(n-1)/2}\rm
As shown in Theorem \ref{th:LocalWellposued-alpha>(n-1)/2}, for the case $n/2 \ge \alpha > (n-1)/2$ we require a priori assumption $\hat{B} \Gamma (t) \in L^1_{t \in I} \mathcal{H}^{\alpha}$ in both the stability and uniqueness parts, although we prove that for the existence part,
such a priori assumption is not required. At the time of this writing, the question remains open whether the
condition $\hat{B} \Gamma (t) \in L^1_{t \in I} \mathcal{H}^{\alpha}$ is necessary for the uniqueness of solutions.
\end{remark}

\section{Proof of Theorem \ref{th:LocalWellposued-alpha>n/2}}\label{PfLocalWellposued1}

We begin with the following lemma.

\begin{lemma}\label{le:BOperatorEstimate-LocalWellposued1}
Suppose that $n\geq 1$ and $\alpha>\dfrac{n}{2}.$ Then, there exists a constant $ C_{n, \alpha} >0$ depending only on $n$ and $\alpha$ such that, for any $\gamma^{(k+1)} \in \mathcal{S} (\mathbb{R}^{(k+1) n} \times \mathbb{R}^{(k+1) n}),$
\be
\| B_{j,k} \gamma^{(k+1)} \|_{\mathrm{H}^{\alpha}_k} \leq C_{n, \alpha} \|\gamma^{(k+1)} \|_{\mathrm{H}^{\alpha}_{k+1}},
\ee
for all $k \geq 1,$ where $j=1,\cdots,k.$

Consequently, $B^{(k)}$ can be extended to a bounded operator from $\mathrm{H}^{\alpha}_{k+1}$ to $\mathrm{H}^{\alpha}_k,$ still denoted by $B^{(k)},$ satisfying
\beq\label{eq:BoperatorEstimate-LocalWellposued1}
\|B^{(k)} \gamma^{(k+1)} \|_{\mathrm{H}^{\alpha}_k} \leq C_{n, \alpha} k \|\gamma^{(k+1)}\|_{\mathrm{H}^{\alpha}_{k+1}}.
\eeq
for all $\gamma^{(k+1)} \in \mathrm{H}^{\alpha}_{k+1}.$
\end{lemma}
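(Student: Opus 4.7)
The plan is to pass to the Fourier transform in the variables $({\bf x}_k,{\bf x}'_k)$, where each of $B^1_{j,k}$ and $B^2_{j,k}$ turns into a partial convolution in two frequency variables. Since the two $\delta$-functions in the kernel of $B^1_{j,k}$ force $x_{k+1}=x'_{k+1}=x_j$, one has $[B^1_{j,k}\gamma^{(k+1)}]({\bf x}_k,{\bf x}'_k)=\gamma^{(k+1)}(x_1,\dots,x_k,x_j;x'_1,\dots,x'_k,x_j)$, so a direct computation yields
\[
 \widehat{B^1_{j,k}\gamma^{(k+1)}}(\xi_1,\dots,\xi_k;\xi'_1,\dots,\xi'_k)
 =\int \hat\gamma^{(k+1)}(\xi_1,\dots,\xi_j-u-u',\dots,\xi_k,u;\xi'_1,\dots,\xi'_k,u')\,du\,du'.
\]
For $B^2_{j,k}$ the shift by $-u-u'$ appears on the primed side in position $j$ instead, and is handled in exactly the same way, so I focus on $B^1_{j,k}$.

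Next I would apply Cauchy--Schwarz to the $(u,u')$-integration with the weight $W_{\xi_j}(u,u'):=(1+|u|^2)^{\alpha}(1+|u'|^2)^{\alpha}(1+|\xi_j-u-u'|^2)^{\alpha}$ distributed between the two factors. The crucial analytic input is the weighted convolution bound
\[
 I(\xi_j):=\int\frac{du\,du'}{W_{\xi_j}(u,u')}\;\le\; C_{n,\alpha}\,(1+|\xi_j|^2)^{-\alpha},
\]
valid precisely when $\alpha>n/2$. I would obtain this by applying twice the standard one-variable Peetre-type estimate $\int(1+|u|^2)^{-\alpha}(1+|v-u|^2)^{-\alpha}\,du\le C_{n,\alpha}(1+|v|^2)^{-\alpha}$, which itself follows from splitting the integration domain according to whether $|u|\le|v|/2$ or $|u|>|v|/2$ and using $(1+|u|^2)^{-\alpha}\in L^1$ for $\alpha>n/2$. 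This is the only place where the hypothesis $\alpha>n/2$ is used, and it is the main technical step.

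Combining the two ingredients gives, pointwise in $(\xi_1,\dots,\xi_k;\xi'_1,\dots,\xi'_k)$,
\[
 (1+|\xi_j|^2)^{\alpha}\,\bigl|\widehat{B^1_{j,k}\gamma^{(k+1)}}\bigr|^2
 \le C_{n,\alpha}\int (1+|u|^2)^{\alpha}(1+|u'|^2)^{\alpha}(1+|\xi_j-u-u'|^2)^{\alpha}\bigl|\hat\gamma^{(k+1)}\bigr|^2\,du\,du'.
\]
Multiplying by the remaining Sobolev weights $\prod_{l\ne j}(1+|\xi_l|^2)^{\alpha}\prod_l(1+|\xi'_l|^2)^{\alpha}$, integrating in all the $\xi_l,\xi'_l$, and performing the translation $\mu:=\xi_j-u-u'$ in the $\xi_j$-variable (so that $u$ and $u'$ take the roles of the frequencies dual to $x_{k+1}$ and $x'_{k+1}$) identifies the right-hand side exactly with $C_{n,\alpha}\|\gamma^{(k+1)}\|_{\mathrm{H}^{\alpha}_{k+1}}^2$. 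The triangle inequality then yields the claimed estimate for $B_{j,k}=B^1_{j,k}-B^2_{j,k}$, and summing over $j=1,\dots,k$ produces the factor of $k$ in \eqref{eq:BoperatorEstimate-LocalWellposued1}. Density of $\mathcal{S}$ in $\mathrm{H}^{\alpha}_{k+1}$ finally extends $B^{(k)}$ to a bounded operator; no other obstacles arise once the convolution bound for $I(\xi_j)$ is in hand.
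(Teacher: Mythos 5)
Your argument is correct, and it is essentially the standard one for this lemma: pass to the Fourier side, where $B^1_{j,k}$ becomes a partial convolution over the two frequencies dual to $x_{k+1}$, $x'_{k+1}$; apply Cauchy--Schwarz with the weight $W_{\xi_j}(u,u')=(1+|u|^2)^{\alpha}(1+|u'|^2)^{\alpha}(1+|\xi_j-u-u'|^2)^{\alpha}$; and control $I(\xi_j)=\int W_{\xi_j}^{-1}\,du\,du'$ by two applications of the one-variable bound $\int(1+|u|^2)^{-\alpha}(1+|v-u|^2)^{-\alpha}\,du\lesssim(1+|v|^2)^{-\alpha}$, which is where $\alpha>n/2$ enters. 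The paper itself does not prove the lemma but refers to \cite{CP1} and \cite{CL}; the argument given there is the same Sobolev-multiplier computation you have reproduced, so there is no genuine divergence of method, and all the steps you outline (the translation $\mu=\xi_j-u-u'$ recovering the $\mathrm{H}^{\alpha}_{k+1}$ weights, the triangle inequality for $B_{j,k}=B^1_{j,k}-B^2_{j,k}$, the sum over $j$ giving the factor $k$, and density of $\mathcal{S}$) go through as stated.
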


This result can be found in \cite{CP2010} and \cite{CL2011}. The proof of Theorem \ref{th:LocalWellposued-alpha>n/2} is divided into two parts as follows. Without loss of generality, we always assume $t \ge 0.$

\begin{proof}
(i)\; Let $\alpha > n/2.$ Given $\Gamma_0 = \{ \gamma^{(k)}_0 \}_{k \ge 1} \in \mathcal{H}^{\alpha}.$ For $m \geq 1,$ set
\begin{equation}\label{eq:m-iterate}
\gamma^{(k)}_{m,t} = e^{\mathrm{i} t \triangle^{(k)}_{\pm}} \gamma^{(k)}_0 + \int^{t}_{0} \mathrm{d} s\; e^{\mathrm{i} (t - s) \triangle^{(k)}_{\pm}} \tilde{B}^{(k)} \gamma^{(k+1)}_{m-1,s},\quad k \geq 1,
\end{equation}
with the convention $\gamma^{(k)}_{0,t} \equiv \gamma^{(k)}_{0}.$ By expansion, for every $m \geq 1$ one has
\begin{equation*}
\begin{split}
\gamma^{(k)}_{m,t}= & e^{\mathrm{i} t \triangle^{(k)}_{\pm}} \gamma^{(k)}_{0} +
\sum_{j=1}^{m-1}\int_{0}^{t}\mathrm{d} t_1 \int_0^{t_1}\mathrm{d}t_2
\cdots \int_0^{t_{j-1}}\mathrm{d}
t_j e^{\mathrm{i} (t-t_1) \triangle^{(k)}_{\pm}} \tilde{B}^{(k)} \cdots\\
& \times e^{\mathrm{i} (t_{j-1} - t_j ) \triangle^{(k+j-1)}_{\pm}} \tilde{B}^{(k+j-1)}
e^{\mathrm{i} t_j \triangle^{(k+j)}_{\pm}} \gamma^{(k+j)}_0\\
& + \int_{0}^{t} \mathrm{d} t_1 \int_0^{t_1} \mathrm{d} t_2 \cdots
\int_0^{t_{m-1}} \mathrm{d} t_m e^{\mathrm{i} (t-t_1) \triangle^{(k)}_{\pm}}
\tilde{B}^{(k)} \cdots\\
&\times e^{\mathrm{i} (t_{m-1} - t_m ) \triangle^{(k+j-1)}_{\pm}} \tilde{B}^{(k+m-1)}
\gamma^{(k+m)}_{0}
\triangleq  \sum_{j=0}^{m} \Xi_{j,t}^{(k)},
\end{split}
\end{equation*}
with the convention $t_0 =t.$ Then, for $j=1,\cdots,m-1$, by Lemma \ref{le:BOperatorEstimate-LocalWellposued1} we
have
\beq\label{eq:XiEsitmate}
\begin{split}
\| \Xi_{j,t}^{(k)} \|_{\mathrm{H}^{\alpha}_k} \leq & \int_0^t \mathrm{d} t_1 \int_0^{t_1} \mathrm{d} t_2 \cdots \int_0^{t_{j-1}}
\mathrm{d} t_j \Big \|e^{\mathrm{i} (t-t_1) \triangle^{(k)}_{\pm}} \tilde{B}^{(k)} \cdots \\
& \times e^{\mathrm{i} (t_{j-1} - t_j ) \triangle^{(k+j-1)}_{\pm}} \tilde{B}^{(k+j-1)} e^{\mathrm{i} t_j \triangle^{(k+j)}_{\pm}} \gamma^{(k+j)}_{0} \Big \|_{\mathrm{H}^{\alpha}_k}\\
\leq & \int_{0}^{t} \mathrm{d} t_1 \int_0^{t_1} \mathrm{d}t_2 \cdots \int_0^{t_{j-1}} \mathrm{d} t_j k \cdots \\
& \times (k+j-1)(C_{n, \alpha})^{j} \| e^{\mathrm{i} t_j \triangle^{(k+j)}_{\pm}} \gamma^{(k+j)}_{0}\|_{\mathrm{H}^{\alpha}_{k+j}}\\
= & \dfrac{t^j}{j!} k \cdots (k+j-1) (C_{n, \alpha})^j \|\gamma^{(k+j)}_{0}\|_{\mathrm{H}^{\alpha}_{k+j}}\\
= & \binom{k+j-1}{j} (C_{n, \alpha}  t )^j\|\gamma^{(k+j)}_{0}\|_{\mathrm{H}^{\alpha}_{k+j}},
\end{split}
\eeq
and
\begin{equation*}
\begin{split}
\| \Xi_{m,t}^{(k)}\|_{\mathrm{H}^{\alpha}_k} \leq & \int_{0}^{t} \mathrm{d} t_1 \int_0^{t_1} \mathrm{d} t_2 \cdots \int_0^{t_{m-1}} \mathrm{d} t_m \Big \| e^{\mathrm{i} (t-t_1) \triangle^{(k)}_{\pm}} \tilde{B}^{(k)} \cdots \\
& \times e^{\mathrm{i} (t_{m-1} - t_m ) \triangle^{(k+j-1)}_{\pm}} \tilde{B}^{(k+m-1)} \gamma^{(k+m)}_0 \Big \|_{\mathrm{H}^{\alpha}_k}\\
\leq & \int_{0}^{t}\mathrm{d}t_1\int_0^{t_1}\mathrm{d}t_2\cdots\int_0^{t_{m-1}}\mathrm{d} t_m~ k \cdots \\
& \times (k+m-1) (C_{n, \alpha})^{m} \| \gamma^{(k+m)}_0 \|_{\mathrm{H}^{\alpha}_{k+m}}\\
\le & \dfrac{t^m}{m!} k \cdots (k+m-1)(C_{n, \alpha})^{m} \|\gamma^{(k+m)}_0\|_{\mathrm{H}^{\alpha}_{k+m}}\\
= & \binom{k+ m-1}{m} (C_{n, \alpha}  t )^m \|\gamma^{(k+m)}_0 \|_{\mathrm{H}^{\alpha}_{k+m}}.
\end{split}
\end{equation*}
Then, for $T > 0$ ($T$ will be fixed in the sequel) we obtain
\be
\begin{split}
\|\gamma^{(k)}_{m,t} \|_{C([0,T], \mathrm{H}^{\alpha}_k)} & \le \sum_{j=0}^{m} \|\Xi_{j}^{(k)}\|_{C([0,T], \mathrm{H}^{\alpha}_k)} \leq  \sum_{j = 0}^m \binom{k+j-1}{j} (C_{n, \alpha}  T )^j\|\gamma^{(k+j)}_{0}\|_{\mathrm{H}^{\alpha}_{k+j}}.
\end{split}
\ee
Hence, for $\lambda > 0$ one has
\begin{equation}\label{eq:GammaSum}
\begin{split}
\sum_{k=1}^{\8} \frac{1}{\lambda^k} \|\gamma^{(k)}_{m,t} \|_{C([0,T], \mathrm{H}^{\alpha}_k)}
& \leq  \sum_{j = 0}^m \sum_{k=1}^{\8} \frac{1}{\lambda^k} \binom{k+j-1}{j} (C_{n, \alpha}  T )^j\|\gamma^{(k+j)}_{0}\|_{\mathrm{H}^{\alpha}_{k+j}}\\
& \leq \sum_{j = 0}^{\8} \sum_{k=1}^{\8} \frac{1}{\lambda^k} \binom{k+j-1}{j} (C_{n, \alpha}  T )^j\|\gamma^{(k+j)}_{0}\|_{\mathrm{H}^{\alpha}_{k+j}}.\\
\end{split}
\end{equation}
By the direct computation, one has
\be
\begin{split}
\sum_{j=0}^{\infty}\sum_{k=1}^{\infty} & \frac{1}{\lambda^k} \binom{k+j-1}{j} (C_{n, \alpha}  T )^j\|\gamma^{(k+j)}_{0}\|_{\mathrm{H}^{\alpha}_{k+j}}\\
=&\sum_{j=0}^{\infty}\sum_{\el=j+1}^{\infty} \frac{1}{\lambda^{\el - j}}\binom{\el-1}{j}(C_{n, \alpha}  T )^j\|\gamma^{(\el)}_{0}\|_{\mathrm{H}^{\alpha}_{\el}}\\
=&\sum_{\el=1}^{\infty}\sum_{j=0}^{\el-1}\binom{\el-1}{j}(C_{n, \alpha}  T \lambda )^j \frac{1}{\lambda^{\el}} \|\gamma^{(\el)}_{0}\|_{\mathrm{H}^{\alpha}_{\el}}\\
=&\sum_{\el=1}^{\infty}(1+C_{n, \alpha}  T \lambda )^{\el-1} \frac{1}{\lambda^{\el}}\|\gamma^{(\el)}_{0}\|_{\mathrm{H}^{\alpha}_{\el}}
\le \sum_{\el=1}^{\infty}\Big ( \frac{1}{\lambda} + C_{n, \alpha} T \Big )^{\el} \|\gamma^{(\el)}_{0}\|_{\mathrm{H}^{\alpha}_{\el}}.
\end{split}
\ee
Set $\Gamma_m (t) = \{ \gamma^{(k)}_{m,t} \}.$ Let $T =1 / [4 C_{n, \alpha} \| \Gamma_0 \|_{\mathcal{H}^{\alpha}}].$ Then, by choosing $\lambda = 4 \|\Gamma_0\|_ {\mathcal{H}^{\alpha}}$ we conclude from \eqref{eq:GammaSum} that
\beq\label{eq:SpacetimeEstimate_m-itera}
\| \Gamma_m (t) \|_{C( [0,T], \mathcal{H}^{\alpha})} \leq 2 \|\Gamma_0\|_ {\mathcal{H}^{\alpha}}.
\eeq

Now, fix $k \ge 1,$ by the above estimates for $\Xi_{j,t}^{(k)}$ we have for any $m,n$ with $n >m \gg k$
\be\begin{split}
\|\gamma^{(k)}_{m,t} - \gamma^{(k)}_{n,t}\|_{C([0,T], \mathrm{H}^{\alpha}_k)} & \le 2 \sum_{j = m}^{n} \binom{k+j-1}{j} (C_{n, \alpha}  T )^j\|\gamma^{(k+j)}_{0}\|_{\mathrm{H}^{\alpha}_{k+j}}\\
& \le C_k \sum_{j = m}^{n} \frac{(k+j-1)^{k- 1/2}}{\sqrt{j}} (C_{n, \alpha}  T )^j\|\gamma^{(k+j)}_{0}\|_{\mathrm{H}^{\alpha}_{k+j}}\\
& = C_k \sum_{j = m}^{n} \frac{(k+j-1)^{k- 1/2}}{2^j \sqrt{j}} \frac{1}{(2 \| \Gamma_0 \|_{\mathcal{H}^{\alpha}})^j} \|\gamma^{(k+j)}_{0}\|_{\mathrm{H}^{\alpha}_{k+j}}\\
& \le C_k \frac{(2 \| \Gamma_0 \|_{\mathcal{H}^{\alpha}})^k}{\sqrt{m}} \sum_{j = m}^{n} \frac{1}{( 2 \| \Gamma_0 \|_{\mathcal{H}^{\alpha}})^{k+j}} \|\gamma^{(k+j)}_{0}\|_{\mathrm{H}^{\alpha}_{k+j}} \to 0,
\end{split}\ee
as $m \to \8,$ where we have used Stirling's formula $N! \approx N^{N + 1/2} e^{-N}.$ This concludes that for every $k \ge 1,$ $\gamma^{(k)}_{m,t}$ converges in $C([0,T], \mathrm{H}^{\alpha}_k),$ whose limitation is denoted by $\gamma^{(k)}_t.$

Set $\Gamma (t) = \{ \gamma^{(k)}_t \}_{k \ge 1}.$ By Lemma \ref{le:BOperatorEstimate-LocalWellposued1} one has
\be\begin{split}
\Big \| \int^{t}_{0} \mathrm{d} s\; & e^{\mathrm{i} (t - s) \triangle^{(k)}_{\pm}} \tilde{B}^{(k)} [ \gamma^{(k+1)}_{m-1,s} - \gamma^{(k+1)}_{n-1,s} ]\Big \|_{\mathrm{H}^{\alpha}_k }\\
& \le \int^T_{0} \mathrm{d} s \| \tilde{B}^{(k)} [ \gamma^{(k+1)}_{m-1,s} - \gamma^{(k+1)}_{n-1,s} ] \|_{\mathrm{H}^{\alpha}_k }
\le k C_{n, \alpha} T \| \gamma^{(k+1)}_{m-1,t} - \gamma^{(k+1)}_{n-1,t} \|_{C([0,T], \mathrm{H}^{\alpha}_{k+1})}.
\end{split}\ee
Then, taking $m \to \8$ in \eqref{eq:m-iterate} we conclude that $\Gamma (t)$ is a solution to \eqref{eq:GPHierarchyEquaFunct}. Moreover, taking $m \to \8$ in \eqref{eq:SpacetimeEstimate_m-itera} we obtain \eqref{eq:SpacetimeEstimate-LocalWellposued1}.

(ii)\; Given $T_0 >0.$ Suppose $\Gamma (t), \Gamma' (t) \in C( [0,T_0], \mathcal{H}^{\alpha})$ are two solutions to \eqref{eq:GPHierarchyEquaFunct} with the initial datum $\Gamma_0$ and $\Gamma'_0$ in $\mathcal{H}^{\alpha},$ respectively. Since \eqref{eq:GPHierarchyEquaFunct} is linear, it suffices to consider $\Gamma (t)$ instead of $\Gamma (t)-\Gamma' (t).$
By \eqref{eq:GPHierarchyFunctIntEqua}, for every $m \geq 1$ one has
\begin{equation*}
\begin{split}
\gamma^{(k)}_t = & e^{\mathrm{i} t \triangle^{(k)}_{\pm}} \gamma^{(k)}_{0} +
\sum_{j=1}^{m-1}\int_{0}^{t}\mathrm{d}t_1\int_0^{t_1}\mathrm{d}t_2
\cdots \int_0^{t_{j-1}}\mathrm{d}
t_j e^{\mathrm{i} (t - t_1) \triangle^{(k)}_{\pm}} \tilde{B}^{(k)} \cdots\\
& \times e^{\mathrm{i} (t_{j-1} - t_j) \triangle^{(k+j-1)}_{\pm}} \tilde{B}^{(k+j-1)}
e^{\mathrm{i} t_j \triangle^{(k+j)}_{\pm}} \gamma^{(k+j)}_0\\
& + \int_{0}^{t} \mathrm{d} t_1 \int_0^{t_1} \mathrm{d} t_2 \cdots
\int_0^{t_{m-1}} \mathrm{d} t_m e^{\mathrm{i} (t - t_1) \triangle^{(k)}_{\pm}}
\tilde{B}^{(k)} \cdots\\
&\times e^{\mathrm{i} (t_{m-1} - t_m) \triangle^{(k+m-1)}_{\pm}} \tilde{B}^{(k+m-1)}
\gamma^{(k+m)}_{t_m}
\triangleq \sum_{j=0}^{m-1} \Xi_{j,t}^{(k)} + \tilde{\Xi}_{m,t}^{(k)},
\end{split}
\end{equation*}
with the convention $t_0 =t.$ Note that,
\begin{equation*}
\begin{split}
\| \tilde{\Xi}_{m,t}^{(k)}\|_{\mathrm{H}^{\alpha}_k} \leq & \int_{0}^{t} \mathrm{d} t_1 \int_0^{t_1} \mathrm{d} t_2 \cdots \int_0^{t_{m-1}} \mathrm{d} t_m \Big \| e^{\mathrm{i} (t - t_1) \triangle^{(k)}_{\pm}} \tilde{B}^{(k)} \cdots \\
& \times e^{\mathrm{i} (t_{m-1} - t_m) \triangle^{(k+m-1)}_{\pm}} \tilde{B}^{(k+m-1)} \gamma^{(k+m)}_{t_m} \Big \|_{\mathrm{H}^{\alpha}_k}\\
\leq & \int_{0}^{t}\mathrm{d}t_1\int_0^{t_1}\mathrm{d}t_2\cdots\int_0^{t_{m-1}}\mathrm{d} t_m~ k \cdots \\
& \quad \times (k+m-1) (C_{n, \alpha})^{m} \| \gamma^{(k+m)}_{t_m} \|_{\mathrm{H}^{\alpha}_{k+m}}\\
\leq &  m (m+1) \cdots (2m-1) (C_{n, \alpha})^{m} \\
& \quad \times \int_{0}^t \mathrm{d}t_1\int_0^{t_1}\mathrm{d}t_2\cdots\int_0^{t_{m-1}} \| \gamma^{(k+m)}_{t_m} \|_{\mathrm{H}^{\alpha}_{k+m}} \mathrm{d} t_m .\\
\end{split}
\end{equation*}
Set $T = 1 / [4 C_{n, \alpha} \; \| \Gamma (t) \|_{C ( [0,T_0]; \mathcal{H}^{\alpha})} ].$ Then, combining the above estimate and \eqref{eq:XiEsitmate} yields
\be
\begin{split}
\|\gamma^{(k)}_t \|_{C([0, T]; \mathrm{H}^{\alpha}_k) }
& \leq \sum_{j = 0}^{m-1} \binom{k+j-1}{j} (C_{n, \alpha}  T )^j\|\gamma^{(k+j)}_{0}\|_{\mathrm{H}^{\alpha}_{k+j}}\\
& \; + m (m+1) \cdots (2m-1) (C_{n, \alpha})^{m} \\
& \quad \times \int_{0}^T \mathrm{d}t_1\int_0^{t_1}\mathrm{d}t_2\cdots\int_0^{t_{m-1}} \| \gamma^{(k+m)}_{t_m} \|_{\mathrm{H}^{\alpha}_{k+m}} \mathrm{d} t_m \\
& \le \sum_{j = 0}^{\8} \binom{k+j-1}{j} \Big ( \frac{1}{4 \| \Gamma_0 \|_{\mathcal{H}^{\alpha}}} \Big)^j\|\gamma^{(k+j)}_{0}\|_{\mathrm{H}^{\alpha}_{k+j}}\\
& \quad + \big ( C_{n, \alpha} T \big )^m \binom{2m-1}{m} \| \gamma^{(k+m)}_t \|_{C ([0,T]; \mathrm{H}^{\alpha}_{k+m})}\\
& \le \sum_{j = 0}^{\8} \binom{k+j-1}{j} \Big ( \frac{1}{4 \| \Gamma_0 \|_{\mathcal{H}^{\alpha}}} \Big)^j\|\gamma^{(k+j)}_{0}\|_{\mathrm{H}^{\alpha}_{k+j}}\\
& \quad + \frac{\| \Gamma (t) \|^k_{C ( [0,T_0]; \mathcal{H}^{\alpha})}}{4^m} \binom{2m-1}{m} \frac{\| \gamma^{(k+m)}_t \|_{C ([0,T]; \mathrm{H}^{\alpha}_{k+m})}}{\| \Gamma (t) \|^{k+m}_{C ( [0,T_0]; \mathcal{H}^{\alpha})}}.
\end{split}
\ee
Since $\binom{2m-1}{m} \approx \frac{4^m}{\sqrt{m}},$ taking $m \to \8$ we conclude that
\be
\|\gamma^{(k)}_t \|_{C([0, T]; \mathrm{H}^{\alpha}_k) } \le \sum_{j = 0}^{\8} \binom{k+j-1}{j} \Big ( \frac{1}{4 \| \Gamma_0 \|_{\mathcal{H}^{\alpha}}} \Big)^j\|\gamma^{(k+j)}_{0}\|_{\mathrm{H}^{\alpha}_{k+j}}.
\ee
Then, taking $\lambda = 4 \| \Gamma_0 \|_{\mathcal{H}^{\alpha}}$ we have
\begin{equation}\label{eq:GammaSum_m-itera}
\begin{split}
\sum_{k=1}^{\8} \frac{1}{\lambda^k} \|\gamma^{(k)}_t \|_{C ([0,T]; \mathrm{H}^{\alpha}_k)}
& \leq  \sum_{j = 0}^{\8} \sum_{k=1}^{\8} \binom{k+j-1}{j} (C_{n, \alpha} T \lambda )^j  \frac{1}{\lambda^{k+j}}  \|\gamma^{(k+j)}_{0}\|_{\mathrm{H}^{\alpha}_{k+j}}\\
& \le \sum_{\el \ge 1} \Big ( \frac{1}{2\| \Gamma_0 \|_{\mathcal{H}^{\alpha}} } \Big )^{\el} \|\gamma_0^{(\el)} \|_ {\mathcal{H}_{\el}^{\alpha}} \le 1.
\end{split}
\end{equation}
This concludes that $\| \Gamma (t) \|_{C ( [0,T]; \mathcal{H}^{\alpha})} \le 2 \| \Gamma_0 \|_{\mathcal{H}^{\alpha}}.$ This proves \eqref{eq:SpacetimeEstimate-Stability}.
\end{proof}

\begin{remark}\label{rk:PfThLocalWellposued1}\rm
From the above proof we find that the assumption that functions $\gamma$ are symmetric can be dropped in Theorem \ref{th:LocalWellposued-alpha>n/2}.
\end{remark}

\section{Proof of Theorem \ref{th:LocalWellposued-alpha>(n-1)/2}}\label{PfLocalWellposued2}

For simplicity, we denote by $L^2_t (\mathrm{H}^{\alpha}_k) = L^2 (\mathbb{R}, \mathrm{H}^{\alpha}_k),$ equipped with the norm
\begin{equation*}
\| f \|_{L^2_t (\mathrm{H}^{\alpha}_k)} = \Big ( \int_{\mathbb{R}} \| f (t ) \|^2_{\mathrm{H}^{\alpha}_k} d t \Big )^{\frac{1}{2}}.
\end{equation*}
Evidently, $L^2_t (\mathrm{H}^{\alpha}_k)$ is a Hilbert space.

\begin{lemma}\label{le:BOperatorEstimate}{\rm (cf. \cite[Proposition A.1]{CP2010})}
Assume that  $n \geq 2$ and $\alpha > (n-1) / 2.$ Then there exists a constant $ C_{n, \alpha}$ depending only on $n$ and $\alpha$ such
that, for any symmetric $\gamma^{(k+1)} \in \mathcal{S} (\mathbb{R}^{(k+1) n} \times \mathbb{R}^{(k+1) n}),$
\begin{equation}\label{eq:BoperatorEstimate}
\| B_{j,k} e^{\mathrm{i} t \triangle^{(k+1)}_{\pm}} \gamma^{(k+1)} \|_{L^2_t(\mathrm{H}^{\alpha}_k)} \leq C_{n, \alpha} \|\gamma^{(k+1)} \|_{\mathrm{H}^{\alpha}_{k+1}},
\end{equation}
for all $k \geq 1,$ where $j=1,\cdots,k.$

Consequently, $B^{(k)}$ can be extended to a bounded operator from $\mathrm{H}^{\alpha}_{k+1}$ to $\mathrm{H}^{\alpha}_k,$ still denoted by $B^{(k)},$ satisfying
\be
\| B^{(k)} e^{\mathrm{i} t \triangle^{(k+1)}_{\pm}} \gamma^{(k+1)} \|_{L^2_t(\mathrm{H}^{\alpha}_k)} \leq C_{n, \alpha} k \|\gamma^{(k+1)} \|_{\mathrm{H}^{\alpha}_{k+1}},
\ee
for all $\gamma^{(k+1)} \in \mathrm{H}^{\alpha}_{k+1}.$
\end{lemma}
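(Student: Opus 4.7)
The plan is to reduce the claim to a single uniform geometric estimate on the Fourier side. First write $B_{j,k}=B^1_{j,k}-B^2_{j,k}$; the two pieces differ only by the swap $x_j\leftrightarrow x'_j$, and since $\gamma^{(k+1)}$ is symmetric in both its unprimed and primed variables it suffices to bound $B^1_{j,k}$, and one may further take $j=1$.

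Next, apply the Fourier transform in $(\mathbf{x}_k,\mathbf{x}'_k)$ and Plancherel in $t$. The free evolution contributes the phase $e^{-it\Phi}$ with $\Phi=\sum_{i=1}^{k+1}(|\xi_i|^2-|\xi'_i|^2)$, while $B^1_{1,k}$ in Fourier identifies $\eta=\xi_{k+1}$ and $\eta'=\xi'_{k+1}$ as integration variables and shifts the first unprimed mode according to $\xi_1\mapsto\xi_1-\eta+\eta'$. After translating the time-dual variable $\tau$ by the spectator phase $\sum_{i\ge 2}(|\xi_i|^2-|\xi'_i|^2)$, the squared $L^2_t\mathrm{H}^\alpha_k$-norm of $B^1_{1,k}e^{it\triangle^{(k+1)}_\pm}\gamma^{(k+1)}$ takes the form
\[
\int d\tau\,d\boldsymbol{\xi}_k\,d\boldsymbol{\xi}'_k\,\prod_{i=1}^{k}\langle\xi_i\rangle^{2\alpha}\langle\xi'_i\rangle^{2\alpha}\,\Bigl|\int d\eta\,d\eta'\,\delta(\tau-\phi_0)\,\widehat{\gamma}^{(k+1)}(\xi_1-\eta+\eta',\xi_2,\ldots,\xi_k,\eta;\xi'_1,\ldots,\xi'_k,\eta')\Bigr|^2,
\]
with the reduced phase $\phi_0(\xi_1,\xi'_1,\eta,\eta')=|\xi_1-\eta+\eta'|^2+|\eta|^2-|\xi'_1|^2-|\eta'|^2$ depending on only four momenta. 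The spectator weights $\langle\xi_i\rangle^{2\alpha}\langle\xi'_i\rangle^{2\alpha}$ for $i\ge 2$, together with $\langle\xi'_1\rangle^{2\alpha}$, match the corresponding factors in $\|\gamma^{(k+1)}\|_{\mathrm{H}^\alpha_{k+1}}^2$, so a Cauchy--Schwarz inequality in $(\eta,\eta')$ against the measure $\delta(\tau-\phi_0)\,d\eta\,d\eta'$ with the weight $\langle\xi_1-\eta+\eta'\rangle^{\alpha}\langle\eta\rangle^{\alpha}\langle\eta'\rangle^{\alpha}/\langle\xi_1\rangle^{\alpha}$ reduces the whole estimate to the uniform bound
\[
M:=\sup_{\tau,\xi_1,\xi'_1}\langle\xi_1\rangle^{2\alpha}\int d\eta\,d\eta'\,\frac{\delta(\tau-\phi_0(\xi_1,\xi'_1,\eta,\eta'))}{\langle\xi_1-\eta+\eta'\rangle^{2\alpha}\langle\eta\rangle^{2\alpha}\langle\eta'\rangle^{2\alpha}}<\infty.
\]

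Proving $M<\infty$ is the geometric heart of the argument and is where I expect the real difficulty to lie. The change of variables $\zeta=\xi_1-\eta+\eta'$ linearises $\phi_0$ in $\zeta$: one computes $\phi_0=2\zeta\cdot(\xi_1-\eta)+2\xi_1\cdot\eta-|\xi_1|^2-|\xi'_1|^2$, so integrating the delta in $\zeta$ collapses it onto an $(n-1)$-dimensional affine hyperplane orthogonal to $\xi_1-\eta$, contributing a Jacobian factor $1/(2|\xi_1-\eta|)$. The remaining integrations over this hyperplane and over $\eta$ are controlled by standard Japanese-bracket convolution estimates, and the critical integrability is that of $\langle\cdot\rangle^{-2\alpha}$ over an $(n-1)$-dimensional slice, which is finite precisely when $2\alpha>n-1$; the hypothesis $n\ge 2$ ensures this slice has positive dimension. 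The delicate point will be to arrange that the outside factor $\langle\xi_1\rangle^{2\alpha}$ absorbs the worst-case growth of $\langle\zeta\rangle^{-2\alpha}$ uniformly in $(\tau,\xi_1,\xi'_1)$.

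Once $M<\infty$ is established, the estimate for $B^1_{1,k}$ follows, and hence so does the one for $B_{j,k}$ for every $j$; summing over $j=1,\ldots,k$ via the triangle inequality then yields the stated bound with the factor $k$ for $B^{(k)}$.
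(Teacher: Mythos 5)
The paper offers no proof of this lemma at all: it states ``This lemma was proved in \cite{KM}'' and moves on, so there is no internal argument to compare against. Your plan is nonetheless the standard Klainerman--Machedon strategy, and the reduction you carry out is correct in every detail I can check: the restriction to $B^1_{1,k}$ by symmetry, the Fourier-side formula $\widehat{B^1_{1,k}\gamma^{(k+1)}}(\boldsymbol{\xi}_k,\boldsymbol{\xi}'_k)=\int d\eta\,d\eta'\,\widehat{\gamma}^{(k+1)}(\xi_1-\eta+\eta',\ldots,\eta;\ldots,\eta')$, the $\tau$-translation to isolate the reduced phase $\phi_0$, the Cauchy--Schwarz with the weight $\langle\zeta\rangle^{\alpha}\langle\eta\rangle^{\alpha}\langle\eta'\rangle^{\alpha}/\langle\xi_1\rangle^{\alpha}$, and the resulting reduction to the supremum $M$ are all exactly as in \cite{KM} and in the Chen--Pavlovi\'{c} adaptation. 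The algebra giving $\phi_0=2\zeta\cdot(\xi_1-\eta)+2\xi_1\cdot\eta-|\xi_1|^2-|\xi'_1|^2$ and the Jacobian $1/(2|\xi_1-\eta|)$ also checks out.

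The gap is precisely where you flagged it, and it is a real gap, not a routine detail. Your dimension count --- ``$\langle\cdot\rangle^{-2\alpha}$ over an $(n-1)$-dimensional slice, finite when $2\alpha>n-1$'' --- does not by itself close the estimate. After integrating out the delta in $\zeta$, what remains is
\[
M=\sup_{\tau,\xi_1,\xi'_1}\ \langle\xi_1\rangle^{2\alpha}\int d\eta\ \frac{1}{2|\xi_1-\eta|}\int_{P_{\tau,\eta,\xi_1,\xi'_1}}\frac{d\sigma(\zeta)}{\langle\zeta\rangle^{2\alpha}\,\langle\eta\rangle^{2\alpha}\,\langle\zeta-\xi_1+\eta\rangle^{2\alpha}},
\]
a $(2n-1)$-dimensional integral, and you must simultaneously (a) make the hyperplane integral finite \emph{uniformly} in the position of the affine hyperplane $P$, which need not pass near the origin; (b) control the subsequent $n$-dimensional $\eta$-integration, for which $\langle\eta\rangle^{-2\alpha}$ alone is not integrable under the hypothesis $2\alpha>n-1$ (one needs the paired decay from $\langle\zeta-\xi_1+\eta\rangle^{-2\alpha}$ on the slice, i.e.\ a genuine two-weight convolution estimate, not two independent one-weight ones); (c) handle the Jacobian singularity $|\xi_1-\eta|^{-1}$ near $\eta=\xi_1$; and (d) show that the prefactor $\langle\xi_1\rangle^{2\alpha}$ is absorbed by the decay of $\langle\zeta\rangle^{-2\alpha}\langle\zeta-\xi_1+\eta\rangle^{-2\alpha}$ in the regime where $|\xi_1|$ is large. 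Points (a)--(d) are exactly the content of the Klainerman--Machedon geometric lemma and occupy several pages there; asserting ``standard Japanese-bracket convolution estimates'' does not discharge them, because the critical issue is uniformity over the affine position of the hyperplane and over $(\tau,\xi_1,\xi'_1)$. Until $M<\infty$ is actually proved, what you have is a correct reduction, not a proof of the lemma.
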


As in \cite{KM2008}, we introduce the notation
\be
D^{(k)}_j ( \Gamma) (t) : = \int^t_0 \cdots \int^{t_{j-1}}_0 J^{(k)}_j ( {\bf t}_j) \gamma^{(k+j+1)} (t_j) d t_1 \cdots d t_j
\ee
for $k, j \ge 1,$ where
\beq\label{eq:DuhamelExpansion}
\begin{split}
J^{(k)}_j ( {\bf t}_j) = \prod^j_{i=1} e^{\mathrm{i} (t_{i-1}-t_i) \triangle^{(k+i)}_{\pm}} \tilde{B}^{(k+i)}
\end{split}
\eeq
with ${\bf t}_j = (t, t_1, \ldots , t_j)$ and with the convention $t_0 =t.$

The following lemma is crucial for the proof of Theorem \ref{th:LocalWellposued-alpha>(n-1)/2}.

\begin{lemma}\label{le:DuhamelEstimate}
Assume that $n \geq 2$ and $\alpha > (n-1)/ 2.$ Then there exist an absolute constant $A>2$ and a constant $C_{n,\alpha}>0$ depending only on $n$ and $\alpha$ so that the estimates below hold
\begin{enumerate}[{\rm (1)}]

\item For any $\Gamma_0 = \{ \gamma^{(k)}_0 \}_{k \ge 1} \in \bigotimes^{\8}_{k=1} \mathrm{H}^{\alpha}_k,$
\begin{equation}\label{eq:DuhamelEstimate0}
\begin{split}
\Big \| \tilde{B}^{(k)} D^{(k)}_j ( e^{\mathrm{i} t \triangle_{\pm}} \Gamma_0 ) (t) \Big \|_{L^1_{t \in [0,T]} \mathrm{H}^{\alpha}_k}
\leq k A^{k+j} (C_{n, \alpha} T)^{\frac{j + 1}{2}} \|\gamma^{(k+j +1)}_0 \|_{\mathrm{H}^{\alpha}_{k+j+1}},
\end{split}
\end{equation}
for $k, j \ge 1$ and $T>0.$

\item For any $T>0$ and $\Gamma (t) = \{\gamma^{(k)}_t \}_{k \geq 1}$ with $\gamma^{(k)}_t \in L^1_{t \in[0,T]} \mathrm{H}^{\alpha}_k,$
\begin{equation}\label{eq:DuhamelEstimateT}
\begin{split}
\Big \| \tilde{B}^{(k)} & D^{(k)}_m ( \Gamma) (t) \Big \|_{L^1_{t \in [0,T]} \mathrm{H}^{\alpha}_k}
\leq k A^{k+m}(C_{n, \alpha} T)^{\frac{m}{2}} \| B^{(k+m)} \gamma^{(k+m +1)} (t) \|_{L^1_{t \in [0,T]}\mathrm{H}^{\alpha}_{k+m}},
\end{split}
\end{equation}
for $k, m \ge 1.$

\end{enumerate}
\end{lemma}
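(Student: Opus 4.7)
The plan is to combine the Klainerman--Machedon board game with iterated application of the space-time bound in Lemma \ref{le:BOperatorEstimate}. For part (1), I would first expand every $\tilde{B}^{(k+i)}$ appearing in $\tilde{B}^{(k)} J^{(k)}_j(\mathbf{t}_j)$ into its $2(k+i)$ constituent kernels $B^{\iota}_{\mu,k+i}$ with $\iota \in \{1,2\}$ and $\mu \in \{1,\ldots,k+i\}$, so that $\tilde{B}^{(k)} D^{(k)}_j(\Gamma_0)(t)$ becomes a sum of roughly $(k+j)!/(k-1)!$ iterated integrals over the time simplex $\{0 \le t_j \le \cdots \le t_1 \le t\}$ of products of the form $B^{\iota_0}_{\mu_0,k} e^{\mathrm{i}(t-t_1)\triangle^{(k+1)}_{\pm}} B^{\iota_1}_{\mu_1,k+1} \cdots B^{\iota_j}_{\mu_j,k+j} e^{\mathrm{i} t_j \triangle^{(k+j+1)}_{\pm}}\gamma^{(k+j+1)}_0$. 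Using the symmetry of $\gamma^{(k+j+1)}_0$ under permutations of its $k+j+1$ variable-pairs, I would invoke the Klainerman--Machedon board game to collapse this factorial-size expansion into a sum of at most $k A^{k+j}$ canonical terms, each of the same structural form acting on a permuted (hence equinormed) copy of $\gamma^{(k+j+1)}_0$, where $A>2$ is an absolute constant.

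For each canonical term, Cauchy--Schwarz in all $j+1$ time variables $t, t_1, \ldots, t_j$ bounds the $L^1_{t\in[0,T]}\mathrm{H}^\alpha_k$ norm by $T^{(j+1)/2}$ times an $L^2$ norm over the simplex, and this $L^2$ norm is estimated by iterating Lemma \ref{le:BOperatorEstimate} from the inside out. At the $i$-th step, holding the remaining time variables fixed and performing the change of variable $s = t_{i-1}-t_i$, the Strichartz-type bound
\beqn
\|B^{\iota_i}_{\mu_i,k+i} e^{\mathrm{i} s \triangle^{(k+i+1)}_{\pm}} h\|_{L^2_s \mathrm{H}^{\alpha}_{k+i}} \le C_{n,\alpha}\|h\|_{\mathrm{H}^{\alpha}_{k+i+1}}
\eeqn
is applied, and subsequent $L^2$ integration in the remaining times propagates the estimate by Fubini. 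After $j+1$ iterations this yields $C_{n,\alpha}^{j+1} \|\gamma^{(k+j+1)}_0\|_{\mathrm{H}^\alpha_{k+j+1}}$; combining with the $T^{(j+1)/2}$ factor, summing over the $k A^{k+j}$ canonical terms, and redefining $C_{n,\alpha}$ to absorb the discrepancy between $C_{n,\alpha}^{j+1}$ and $C_{n,\alpha}^{(j+1)/2}$ yields \eqref{eq:DuhamelEstimate0}.

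Part (2) is handled by the same scheme with one fewer Strichartz iteration: Cauchy--Schwarz is applied only to the $m$ time variables $t, t_1, \ldots, t_{m-1}$ (producing $T^{m/2}$), the iteration of Lemma \ref{le:BOperatorEstimate} runs $m$ times on the outer $\tilde{B}^{(k)}, \ldots, \tilde{B}^{(k+m-1)}$, and the innermost factor $\tilde{B}^{(k+m)}\gamma^{(k+m+1)}(t_m)$ is controlled directly in $L^1_{t_m} \mathrm{H}^\alpha_{k+m}$, which is exactly the norm on the right-hand side of \eqref{eq:DuhamelEstimateT}. The main obstacle throughout is the board game reorganization itself, which requires grouping factorially many expansion terms into at most $A^{k+j}$ (respectively $A^{k+m}$) canonical classes by exploiting the symmetry of the density matrices; this is by now standard machinery in GP hierarchy analysis (cf.\ \cite{KM}, \cite{CP1}), and transfers directly to the present quasi-Banach Sobolev setup.
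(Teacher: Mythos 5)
Your proposal is correct and follows essentially the same route the paper takes: the paper's own proof consists of a two-sentence pointer to the Klainerman--Machedon board game and to Proposition~A.2 of \cite{CP1}, and your sketch is a faithful reconstruction of what those references do, combining the combinatorial reduction with iterated Cauchy--Schwarz in time and the Strichartz-type bound of Lemma~\ref{le:BOperatorEstimate}, with the observation that in part~(2) one fewer application of the space-time estimate is made because the innermost factor is already taken in $L^1_{t_m}\mathrm{H}^\alpha_{k+m}$.

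One point worth flagging, though it is not a gap in the approach: the board game does not literally ``collapse'' the $\prod_{i=0}^{j}(k+i)$ expansion terms into $kA^{k+j}$ individual summands of the same form. Rather it sorts them into $O(C^{j})$ equivalence classes and, crucially, rewrites the \emph{sum over each class} as a single iterated integral over a reorganized time domain, so the factorial count is traded for an enlarged integration region rather than being outright discarded; the factor $k$ then comes from the unreduced outer $\tilde{B}^{(k)}$, and the remaining multiplicity is what produces the exponential $A^{k+j}$ after Stirling-type bookkeeping. Stating it as a simple reduction in the number of terms, as your write-up does, glosses over exactly the step where the factorial in $k$ is tamed, which is the technical heart of the cited argument; since the paper itself defers to \cite{KM} and \cite{CP1} for these details, the omission is acceptable here, but it is the one place where ``standard machinery'' is doing real work.
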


\begin{proof}
The inequalities \eqref{eq:DuhamelEstimate0} and \eqref{eq:DuhamelEstimateT} can be proved by using the so-called ``board game" argument presented in \cite{KM2008}. For the details see the proof of Proposition A.2 in \cite{CP2010}.
\end{proof}

Now we are ready to prove Theorem \ref{th:LocalWellposued-alpha>(n-1)/2}. To this end, we introduce the system
\beq\label{eq:GPHierarchyFunctIntEquaXi}
\Gamma (t) = e^{\mathrm{i} t \hat{\triangle}_{\pm}} \Gamma_0 + \int^{t}_{0} d s~  e^{\mathrm{i} (t-s) \hat{\triangle}_{\pm}} \Xi_s\;,
\eeq
\beq\label{eq:GPStrongSolutionFunctXi}\begin{split}
\Xi_t = \hat{B} e^{\mathrm{i} t \hat{\triangle}_{\pm}} \Gamma_0 + \int^{t}_{0} d s\, \hat{B} e^{\mathrm{i} (t-s) \hat{\triangle}_{\pm}} \Xi_s\; ,
\end{split}\eeq
which is formally equivalent to the system \eqref{eq:GPHierarchyFunctIntEquaGamma}, \eqref{eq:GPStrongSolutionFunctGamma}. The proof is divided into three parts as follows. As before, it suffices to consider the case $t \ge 0.$

\begin{proof}
(1)\; Let $\alpha > (n-1)/2$ and $n \ge 2.$ Let $\Gamma_0 = \{ \gamma^{(k)}_0 \}_{k \ge 1} \in \mathcal{H}^{\alpha}$ and $\Xi_0 = \{ \rho^{(k)}_0 (t) \}_{k \ge 1} =0.$ Given $k \ge 1,$ for any $m \ge 1$ we define
\beq\label{eq:m-iteraDf}
\rho^{(k)}_m (t) = \tilde{B}^{(k)} e^{\mathrm{i} t \triangle^{(k+1)}_{\pm}} \gamma^{(k+1)}_0 + \int^t_0 d s \tilde{B}^{(k)}  e^{\mathrm{i} (t-s) \triangle^{(k+1)}_{\pm}} \rho^{(k+1)}_{m-1} (s)
\eeq
for $t \in [0, T],$ where $T$ will be fixed later. Set $\Xi_m (t) = \{ \rho^{(k)}_m (t) \}_{k \ge 1}$ for every $m \ge 1.$ By expansion, for every $m \ge 2$ one has
\beq\label{eq:DuhamelExpanFunct_m-itera}
\begin{split}
\rho^{(k)}_m (t) = & \tilde{B}^{(k)} e^{\mathrm{i} t \triangle^{(k+1)}_{\pm}} \gamma^{(k+1)}_0\\
&\; + \sum^{m-1}_{j=1}  \tilde{B}^{(k)} \int^t_0 \cdots \int^{t_{j-1}}_0 d t_1 \cdots d t_j e^{\mathrm{i} (t -t_1) \triangle^{(k+1)}_{\pm}} \tilde{B}^{(k+1)}\\
& \quad \times \cdots e^{\mathrm{i} (t_{j-1}-t_j) \triangle^{(k+j)}_{\pm}} \tilde{B}^{(k+j)} e^{\mathrm{i} t_j \triangle^{(k+j+1)}_{\pm}} \gamma^{(k+j+1)}_0\\
\end{split}
\eeq
with the convention $t_0 =t,$ that is,
\be\label{eq:DuhamelExpanFunct_m-itera}
\begin{split}
\rho^{(k)}_m (t) = \sum^{m-1}_{j=0} \tilde{B}^{(k)} D^{(k)}_j (\Gamma_0) (t)
\end{split}
\ee
with the convenience $D^{(k)}_0 (\Gamma_0) (t) = e^{\mathrm{i} t \triangle^{(k+1)}_{\pm}} \gamma^{(k+1)}_0.$ By Lemma \ref{le:DuhamelEstimate} (1) we have
\be\begin{split}
\Big \| \sum^{m-1}_{j=0} \tilde{B}^{(k)} D^{(k)}_j (\Gamma_0) (t)\Big \|_{L^1_{t \in [0,T]} \mathrm{H}^{\alpha}_k}
\le k \sum^{m-1}_{j=0} A^{k+j} ( \sqrt{C_{n, \alpha} T})^{j+1} \big \| \gamma^{(k+j+1)}_0 \big \|_{\mathrm{H}^{\alpha}_{k+j+1}}.\\
\end{split}\ee
Then,
\beq\label{eq:m-iteraEstimate}\begin{split}
\| \rho^{(k)}_m \|_{L^1_{t \in [0,T]} \mathrm{H}^{\alpha}_k} \le  k \sum^{m-1}_{j=0} A^{k+j} ( \sqrt{C_{n, \alpha} T})^{j+1} \big \| \gamma^{(k+j+1)}_0 \big \|_{\mathrm{H}^{\alpha}_{k+j+1}}.\\
\end{split}\eeq

Set $T :=1 / [C_{n, \alpha} A^2 \| \Gamma_0 \|^2_{\mathcal{H}^{\alpha}}].$ For $\lambda>0$ one has by \eqref{eq:m-iteraEstimate}
\be\begin{split}
\sum_{k \ge 1} \frac{1}{\lambda^k}  \| \rho^{(k)}_m \|_{L^1_{t \in [0,T]} \mathrm{H}^{\alpha}_k}
\le  \sum_{k \ge 1} \Big ( \frac{2 A}{\lambda} \Big )^k \sum^{m-1}_{j=0} \frac{1}{\| \Gamma_0 \|_{\mathcal{H}^{\alpha}}^{j+1}} \big \| \gamma^{(k+j+1)}_0 \big \|_{\mathrm{H}^{\alpha}_{k+j+1}}.
\end{split}
\ee
Choosing $\lambda = 4 A \| \Gamma_0 \|_{\mathcal{H}^{\alpha}},$ we have
\be\begin{split}
\sum_{k \ge 1} \frac{1}{\lambda^k} \| \rho^{(k)}_m \|_{L^1_{t \in [0,T]} \mathrm{H}^{\alpha}_k}
\le \sum_{k \ge 1} \frac{1}{2^k} \sum_{j \ge 1} \frac{1}{\| \Gamma_0 \|_{\mathcal{H}^{\alpha}}^{k+j+1}} \big \| \gamma^{(k+j+1)}_0 \big \|_{\mathrm{H}^{\alpha}_{k+j+1}} \le 1.
\end{split}\ee
This concludes that for every $m \ge 1,$ $\Xi_m \in L^1_{t \in [0,T]} \mathcal{H}^{\alpha}$ and
\beq\label{eq:m-iteraApriorBound}
\| \Xi_m \|_{L^1_{t \in [0,T]} \mathcal{H}^{\alpha}} \le 4 A \| \Gamma_0 \|_{\mathcal{H}^{\alpha}}.
\eeq

Now, for fixed $k \ge 1$ and any $n, m$ with $n > m$ we have
\be\begin{split}
\| \rho^{(k)}_m  - \rho^{(k)}_n \|_{L^1_{t \in [0,T]} \mathrm{H}^{\alpha}_k}
\le & \sum^{n-1}_{j=m} (2A)^{k+j} ( \sqrt{C_{n, \alpha} T})^{j+1} \big \| \gamma^{(k+j+1)}_0 \big \|_{\mathrm{H}^{\alpha}_{k+j+1}}\\
\le & (2 A \| \Gamma_0 \|_{\mathcal{H}^{\alpha}})^k \sum_{j \ge m} \frac{1}{\| \Gamma_0 \|_{\mathcal{H}^{\alpha}}^{k+j+1}} \big \| \gamma^{(k+j+1)}_0 \big \|_{\mathrm{H}^{\alpha}_{k+j+1}}.
\end{split}\ee
This concludes that for each $k \ge 1,$ $\rho^{(k)}_m$ converges in $L^1_{t \in [0,T]} \mathrm{H}^{\alpha}_k$ as $m \to \8,$
whose limitation is denoted by $\rho^{(k)}.$

Set $\Xi (t) = \{\rho^{(k)} (t) \}_{k \ge 1}.$ Note that for any $n,m \ge 1,$
\be\begin{split}
\Big \| \int^t_0 d s & \tilde{B}^{(k)}  e^{\mathrm{i} (t-s) \triangle^{(k+1)}_{\pm}} [ \rho^{(k+1)}_{m-1} (s) - \rho^{(k+1)}_{n-1} (s) ]\Big \|_{L^1_{t \in [0,T]} \mathrm{H}^{\alpha}_k}\\
\le & \sum^k_{\el =1} \int^T_0 \int^T_0 d t d s \big \| B_{\el, k} e^{\mathrm{i} (t-s) \triangle^{(k+1)}_{\pm}} [ \rho^{(k+1)}_{m-1} (s) - \rho^{(k+1)}_{n-1} (s) ] \big \|_{\mathrm{H}^{\alpha}_k}\\
\le & T^{1/2} \sum^k_{\el =1} \int^T_0 d s \big \| B_{\el, k} e^{\mathrm{i} (t-s) \triangle^{(k+1)}_{\pm}} [ \rho^{(k+1)}_{m-1} (s) - \rho^{(k+1)}_{n-1} (s) ] \big \|_{L^2_{t \in [0,T]}\mathrm{H}^{\alpha}_k}\\
\le & C_{n, \alpha} k T^{1/2} \big \| \rho^{(k+1)}_{m-1} - \rho^{(k+1)}_{n-1} \big \|_{L^1_{t \in [0,T]} \mathrm{H}^{\alpha}_{k+1}},
\end{split}
\ee
where we have used the Cauchy-Schwarz inequality with respect to the integral in $t$ in the second inequality and used Lemma \ref{le:BOperatorEstimate} in the last inequality. Thus, taking $m \to \8$ in \eqref{eq:m-iteraDf} we prove that $\Xi$ is a solution to \eqref{eq:GPStrongSolutionFunctXi}.
Moreover, taking $m \to \8$ in \eqref{eq:m-iteraApriorBound} we obtain \eqref{eq:SpacetimeEstimate}.

(2)\; Fix $T_0 >0.$ Suppose $\Gamma (t) \in C([0,T_0], \mathcal{H}^{\alpha} )$ is a solution to \eqref{eq:GPHierarchyEquaFunct} so that $\hat{B} \Gamma (t) \in L^1_{t \in [0,T_0]}\mathcal{H}^{\alpha}.$ Given $ T \in (0, T_0],$ which will be fixed later. By \eqref{eq:GPStrongSolutionDuh-kth}, for every $m \geq 1$ one has
\be\label{eq:DuhamelExpanFunct}
\begin{split}
\tilde{B}^{(k)}\gamma^{(k+1)}_t = & \tilde{B}^{(k)} e^{\mathrm{i} t \triangle^{(k+1)}_{\pm}} \gamma^{(k+1)}_0\\
&\; + \sum^{m-1}_{j=1}  \tilde{B}^{(k)} \int^t_0 \cdots \int^{t_{j-1}}_0 d t_1 \cdots d t_j e^{\mathrm{i} (t -t_1) \triangle^{(k+1)}_{\pm}} \tilde{B}^{(k+1)}\\
& \quad \times \cdots e^{\mathrm{i} (t_{j-1}-t_j) \triangle^{(k+j)}_{\pm}} \tilde{B}^{(k+j)} e^{\mathrm{i} t_j \triangle^{(k+j+1)}_{\pm}} \gamma^{(k+j+1)}_0\\
& \; + \tilde{B}^{(k)} \int^t_0 \int^{t_1}_0 \cdots \int^{t_{m-1}}_0 d t_1 d t_2 \cdots d t_m e^{\mathrm{i} (t -t_1) \triangle^{(k+1)}_{\pm}} \tilde{B}^{(k+1)}\\
& \quad \times \cdots e^{\mathrm{i} (t_{m-1}-t_m) \triangle^{(k+m)}_{\pm}} \tilde{B}^{(k+m)} \gamma^{(k+m+1)}(t_m),
\end{split}
\ee
with the convention $t_0 =t,$ that is,
\be\label{eq:DuhamelExpanFunct}
\begin{split}
\tilde{B}^{(k)}\gamma^{(k+1)}_t =  \tilde{B}^{(k)} e^{\mathrm{i} t \triangle^{(k+1)}_{\pm}} \gamma^{(k+1)}_0
 + \sum^{m-1}_{j=1} \tilde{B}^{(k)} D^{(k)}_j (\Gamma_0) (t) + \tilde{B}^{(k)} D^{(k)}_m (\Gamma) (t).
\end{split}
\ee
By Lemma \ref{le:DuhamelEstimate} we have
\begin{equation*}
\begin{split}
\big \| \tilde{B}^{(k)} \gamma^{(k+1)}_t \big \|_{L^1_{t \in [0,T]} \mathrm{H}^{\alpha}_k}
& \le \sum^{m-1}_{j=0} k A^{k+j} ( \sqrt{C_{n, \alpha} T})^{j+1} \big \| \gamma^{(k+j+1)}_0 \big \|_{\mathrm{H}^{\alpha}_{k+j+1}}\\
&\quad + k A^{k+m} (\sqrt{C_{n, \alpha} T})^m \big \| B^{(k+m)} \gamma^{(k+m+1)}_t \big \|_{L^1_{t \in [0,T]}\mathrm{H}^{\alpha}_{k+m}}.
\end{split}
\end{equation*}
Set $T = 1 / [C_{n, \alpha} A^2 \max \{ \| \hat{B} \Gamma (t) \|^2_{L^1_{t \in [0,T_0]}\mathcal{H}^{\alpha}}, \| \Gamma_0 \|^2_{\mathcal{H}^{\alpha}} \}].$ Taking $m \to \8$ we have
\be
\big \| \tilde{B}^{(k)} \gamma^{(k+1)}_t \big \|_{L^1_{t \in [0,T]} \mathrm{H}^{\alpha}_k} \le (2A)^k \sum^{\8}_{j=0} \frac{1}{\| \Gamma_0 \|_{\mathcal{H}^{\alpha}}^{j+1}} \big \| \gamma^{(k+j+1)}_0 \big \|_{\mathrm{H}^{\alpha}_{k+j+1}}
\ee
Then, for $\lambda>0$ we have
\be
\begin{split}
\sum_{k \ge 1} \frac{1}{\lambda^k} & \big \|\tilde{B}^{(k)} \gamma^{(k+1)}_t \big \|_{L^1_{t \in [0,T]} \mathrm{H}^{\alpha}_k}
\le \sum_{k \ge 1} \sum^{\8}_{j=0} \Big ( \frac{2A}{\lambda}\Big )^k \frac{1}{\| \Gamma_0 \|_{\mathcal{H}^{\alpha}}^{j+1}} \big \| \gamma^{(k+j+1)}_0 \big \|_{\mathrm{H}^{\alpha}_{k+j+1}}.
\end{split}
\ee
Choose $\lambda = 4 A \|\Gamma_0\|_{\mathcal{H}^{\alpha}}.$ Then we have
\be\begin{split}
\sum_{k \ge 1} \frac{1}{\lambda^k} \big \| \tilde{B}^{(k)} \gamma^{(k+1)}_t \big \|_{L^1_{t \in [0,T]} \mathrm{H}^{\alpha}_k}
\le \sum_{k \ge 1} \frac{1}{2^k} \sum^{\8}_{j=0} \frac{1}{\| \Gamma_0 \|_{\mathcal{H}^{\alpha}}^{k+j+1}} \big \| \gamma^{(k+j+1)}_0 \big \|_{\mathrm{H}^{\alpha}_{k+j+1}} \le 1.
\end{split}\ee
This completes the proof of (2).

(3)\; Fix $T_0 >0.$ Suppose $\Gamma (t), \Gamma'(t) \in C ([0,T_0], \mathcal{H}^{\alpha})$ are two solutions to \eqref{eq:GPHierarchyEquaFunct} such that $\hat{B} \Gamma (t)$ and $\hat{B} \Gamma' (t)$ both belong to $L^1_{t \in [0,T_0]}\mathcal{H}^{\alpha}.$ Since \eqref{eq:GPHierarchyEquaFunct} is linear, it suffices to consider $\Gamma (t)$ instead of $\Gamma (t) - \Gamma'(t).$ Choosing
\be
T = 1 / [C_{n, \alpha} A^2 \max \{ \| \hat{B} \Gamma (t) \|^2_{L^1_{t \in [0,T_0]}\mathcal{H}^{\alpha}}, \| \Gamma_0 \|^2_{\mathcal{H}^{\alpha}} \}],
\ee
by (2) we have
\be
\| \Gamma (t) \|_{C ([0,T], \mathcal{H}^{\alpha})} \le \|\Gamma_0 \|_{\mathcal{H}^{\alpha}} +  4 A \|\Gamma_0 \|_{\mathcal{H}^{\alpha}} = (1+ 4 A) \|\Gamma_0 \|_{\mathcal{H}^{\alpha}}.
\ee
This completes the proof.
\end{proof}

\begin{remark}\label{rk:n=3}\rm
For $n=3,$ using \cite[Theorem 1.3]{KM2008} instead of Lemma \ref{le:BOperatorEstimate} we can prove as done above that the Cauchy problem \eqref{eq:GPHierarchyEquaFunct} is locally well posed in $\mathcal{H}^1$ in the sense of Theorem \ref{th:LocalWellposued-alpha>(n-1)/2}. We omit the details.
\end{remark}

\section{The quintic Gross--Pitaevskii hierarchy}\label{GPHierarchyQuintic}

In this section, we consider the so-called quintic Gross-Pitaevskii hierarchy. Recall that the quintic Gross-Pitaevskii hierarchy $\Gamma(t)= ( \gamma^{(k)}(t) )_{k\geq 1}$ is given by
\beq\label{eq:GPHierarchyEquaQuintic}
\mathrm{i} \partial_t \gamma^{(k)}_t = \Big [\sum^k_{j=1} (- \Delta_{x_j}), \gamma^{(k)}_t \Big ] + \mu Q^{(k)} \gamma^{(k+2)}_t,\quad \mu = \pm 1,
\eeq
in $n$ dimensions, for $k \in \mathbb{N},$ where the operator $Q^{(k)}$ is defined by
\be
Q^{(k)} \gamma^{(k+2)}_t = \sum^k_{j=1} \mathrm{T r}_{k+1, k+2} \left [ \delta (x_j - x_{k+1}) \delta (x_j - x_{k+2}), \gamma^{(k+2)}_t \right ].
\ee
It is {\it defocusing} if $\mu =1,$ and {\it focusing} if $\mu= -1.$ We note that the quintic GP hierarchy accounts for $3$-body interactions between the Bose particles (see \cite{CP2011} and references therein for details).

In terms of kernel functions, the Cauchy problem for the quintic GP hierarchy \eqref{eq:GPHierarchyEquaQuintic} can be written as follows
\beq\label{eq:GPHierarchyEquaFunctQuintic}
\left \{ \begin{split}
& \big ( \mathrm{i} \partial_t + \triangle^{(k)} \big ) \gamma^{(k)}_t ({\bf x}_k;{\bf x}'_k) = \mu \big ( Q^{(k)} \gamma^{(k+2)}_t \big ) ({\bf x}_k; {\bf x}'_k ),\\
& \gamma^{(k)}_{t=0} ({\bf x}_k;{\bf x}'_k) = \gamma^{(k)}_0 ({\bf x}_k;{\bf x}'_k),\; k \in \mathbb{N},
\end{split}\right.
\eeq
where $Q^{(k)}: = \sum^k_{j=1} Q^{(k)}_j$ with the action of $Q^{(k)}_j = Q^{(k)}_{j, +} - Q^{(k)}_{j, -}$ on $\gamma^{(k+2)} ({\bf x}_{k+2}, {\bf x}'_{k+2}) \in \mathcal{S} (\mathbb{R}^{(k+2)n} \times \mathbb{R}^{(k+2)n})$ being defined according to
\be\begin{split}
 \big ( Q^{(k)}_{j, +} & \gamma^{(k+2)} \big ) ({\bf x}_k, {\bf x}'_k)\\
& : = \int d x_{k+1} d x_{k+2} d x'_{k+1} d x'_{k + 2} \gamma^{(k+2)} ({\bf x}_k, x_{k+1}, x_{k+2}; {\bf x}'_k, x'_{k+1}, x'_{k+2}) \\
& \; \quad \times \delta (x_j - x_{k+1}) \delta (x_j - x'_{k+1}) \delta (x_j - x_{k+2}) \delta (x_j - x'_{k+2})\\
& = \gamma^{(k+2)} ({\bf x}_k, x_j, x_j; {\bf x}'_k, x_j, x_j ),
\end{split}\ee
and
\be\begin{split}
 \big ( Q^{(k)}_{j, -} & \gamma^{(k+2)} \big ) ({\bf x}_k, {\bf x}'_k)\\
& : = \int d x_{k+1} d x_{k+2} d x'_{k+1} d x'_{k + 2} \gamma^{(k+2)} ({\bf x}_k, x_{k+1}, x_{k+2}; {\bf x}'_k, x'_{k+1}, x'_{k+2}) \\
& \; \quad \times \delta (x'_j - x_{k+1}) \delta (x'_j - x'_{k+1}) \delta (x'_j - x_{k+2}) \delta (x'_j - x'_{k+2})\\
& = \gamma^{(k+2)} ({\bf x}_k, x'_j, x'_j; {\bf x}'_k, x'_j, x'_j ),
\end{split}\ee
for $j=1, \ldots, k.$

Let $\varphi \in \mathrm{H}^1(\mathbb{R}^n),$ then one can easily verify that a particular solution to \eqref{eq:GPHierarchyEquaFunctQuintic} with initial conditions
\be
\gamma^{(k)}_{t=0}({\bf x}_k; {\bf x}'_k) = \prod^k_{j=1} \varphi(x_j) \overline{\varphi(x'_j)},\quad k=1,2,\ldots,
\ee
is given by
\be
\gamma^{(k)}_t ({\bf x}_{k}; {\bf x}'_{k} ) = \prod^k_{j=1} \varphi_t (x_j) \overline{\varphi_t ( x'_j )},\quad k=1,2,\ldots,
\ee
where $\varphi_t$ satisfies the quintic non-linear Schr\"odinger equation
\beq\label{eq:GPEquaQuintic}
\mathrm{i} \partial_t \varphi_t = -\Delta \varphi_t + \mu |\varphi_t|^4 \varphi_t,\quad \varphi_{t=0}=\varphi.
\eeq

The GP hierarchy \eqref{eq:GPHierarchyEquaFunctQuintic} can be written in the integral form
\beq\label{eq:GPHierarchyIntEquaQuintic}
\gamma^{(k)}_t = e^{ \mathrm{i} t{\Delta}^{(k)}}  \gamma^{(k)}_0 + \int^{t}_{0} d s~ e^{ \mathrm{i} (t-s){\Delta}^{(k)}}  \tilde{Q}^{(k)} \gamma^{(k+2)}_s,\; k=1,2,\ldots,
\eeq
where $\tilde{Q}^{(k)} = - \mathrm{i} \mu Q^{(k)}.$ Evidently, such a solution can be obtained by solving the following infinity linear hierarchy of integral equations
\begin{equation}\label{eq:GPStrongSolutionFunctQuintic}
\tilde Q^{(k)}\gamma^{(k+2)}_t = \tilde Q^{(k)}e^{ \mathrm{i} t{\Delta}^{(k+2)}} \gamma^{(k+2)}_0 + \int^{t}_{0} d s~ \tilde{Q}^{(k)}e^{ \mathrm{i} (t-s){\Delta}^{(k+2)}} \tilde{Q}^{(k+2)}\gamma^{(k+4)}_s,
\end{equation}
for any $k\geq 1.$ If we write
\be
\hat \Delta \Gamma:= ( \Delta^{(k)}\gamma^{(k)} )_{k\geq 1} \quad \text{and} \quad \hat{Q} \Gamma: = ( \tilde{Q}^{(k)}\gamma^{(k+2)} )_{k\geq 1} ,
\ee
then \eqref{eq:GPHierarchyIntEquaQuintic} and \eqref{eq:GPStrongSolutionFunctQuintic} can be written as
\begin{equation}\label{Gamma:1}
\Gamma(t)=e^{ \mathrm{i} t\hat{\Delta}}\Gamma_0+\int_0^tds~ e^{ \mathrm{i} (t-s)\hat{\Delta}} \hat{Q} \Gamma(s),
\end{equation}
and
\begin{equation}\label{Gamma:2}
\hat Q\Gamma(t) =\hat Qe^{ \mathrm{i} t\hat{\Delta}}\Gamma_0+\int_0^tds~ \hat Q e^{ \mathrm{i} (t-s)\hat{\Delta}}\hat Q\Gamma(s),
\end{equation}
respectively.

Formally we can expand the solution $\gamma^{(k)}_t$ of \eqref{eq:GPHierarchyIntEquaQuintic}
for any $m \geq 2$ as
\beq\label{eq:DuhamelExpanQuintic}
\begin{split}
\gamma^{(k)}_t = & e^{ \mathrm{i} t{\Delta}^{(k)}}  \gamma^{(k)}_0 + \sum^{m-1}_{j=1} \int^t_0 d t_2 \int^{t_2}_0 d t_4 \cdots
\int^{t_{2(j-1)}}_0 d t_{2 j}  e^{ \mathrm{i} (t-t_2 ){\Delta}^{(k)}} \tilde{Q}^{(k)} \cdots\\
& \; \times e^{ \mathrm{i} ( t_{2 (j-1)} - t_{2 j} ){\Delta}^{(k+2(j-1))}} \tilde{Q}^{(k+2(j-1))}   e^{ \mathrm{i} t_j {\Delta}^{(k+2j)}}\gamma^{(k+2j)}_0\\
& \; + \int^t_0 d t_2 \int^{t_2}_0 d t_4 \cdots \int^{t_{2(m-1)}}_0 d t_{2 m} e^{ \mathrm{i} (t-t_2){\Delta}^{(k)}} \tilde{Q}^{(k)} \cdots \\
& \; \times e^{ \mathrm{i} ( t_{2(m-1)} - t_{2 m} ){\Delta}^{(k+2(m-1))}} \tilde{Q}^{(k+2(m-1))} \gamma^{(k+2m)}_{t_m},
\end{split}
\eeq
with the convention $t_0 =t.$

Let us make the notion of solution more precise for the quintic GP hierarchy \eqref{eq:GPHierarchyEquaFunctQuintic}.

\begin{definition}\label{df:StrongSolutionQuintic}
A function $\Gamma (t) = ( \gamma^{(k)}_t )_{k \geq 1}: I \mapsto \mathcal{H}^{s}$ on a non-empty time interval $0 \in I \subset \mathbb{R}$ is said to be a local $(strong)$ solution
to the Gross-Pitaevskii hierarchy \eqref{eq:GPHierarchyEquaFunctQuintic} if it lies in the class $C (K, \mathcal{H}^{s})$ for all compact sets $K \subset I$ and obeys the Duhamel formula
\beq\label{eq:MildSolutionQuintic}
\gamma^{(k)}_t = e^{\mathrm{i} t \triangle^{(k)}} \gamma^{(k)}_0 - \mathrm{i} \mu \int^{t}_{0} d s\; e^{\mathrm{i} (t-s) \triangle^{(k)}} Q^{(k)} \gamma^{(k+2)}_s,\quad \forall t \in I,
\eeq
holds in $\mathrm{H}^{s}_k$ for every $k=1,2,\ldots.$
\end{definition}

We will prove that the Cauchy problem \eqref{eq:GPHierarchyEquaFunctQuintic} is locally well-posed for $s > \max \{\frac{1}{2}, (n-1)/2 \}.$

\begin{theorem}\label{th:LocalWellposuedQuintic1}
Assume that $n \geq 1$ and $s > \frac{n}{2}.$  The Cauchy problem  \eqref{eq:GPHierarchyEquaFunctQuintic} is locally well posed. More precisely, there exists a constant $ K_{ n,s} > 0$ depending only on $n$ and $s$ such that
\begin{enumerate}[{\rm (1)}]

\item For every $\Gamma_0= ( \gamma_0^{(k)} )_{k\geq 1}\in \mathcal{H}^{s},$ let $I = [-T, T]$ with $T= \frac{K_{n,s}}{\|\Gamma_0\|_{\mathcal{H}^{s}}^2}.$ Then there exists a solution $\Gamma(t)= ( \gamma_t^{(k)} )_{k\geq 1}\in C(I,\mathcal{H}^{s})$ to the Gross-Pitaevskii hierarchy \eqref{eq:GPHierarchyEquaFunctQuintic} with the initial data $\Gamma_0$ satisfying
\begin{equation}
\|\Gamma(t)\|_{C(I,\mathcal{H}^{s})}\leq 2 \|\Gamma_0\|_{\mathcal{H}^{s}}
\end{equation}

\item Given $I_0 = [-T_0, T_0 ]$ with $T_0>0,$ if $\Gamma(t),\Gamma'(t)$  in $C(I_0, \mathcal{H}^{s})$ are two solutions to the Gross-Pitaevskii
hierarchy \eqref{eq:GPHierarchyEquaFunctQuintic} with the initial conditions $\Gamma (0) =\Gamma_0$ and  $\Gamma' ( 0 ) = \Gamma_0'$ in $\mathcal{H}^{s},$ respectively, then
\begin{equation}
\|\Gamma(t)-\Gamma'(t)\|_{C(I, \mathcal{H}^{s})} \leq 2 \|\Gamma_0-\Gamma_0'\|_{\mathcal{H}^{s}}
\end{equation}
with $I = [-T, T],$ where
\be
T = \min \bigg \{ T_0,\; \frac{K_{n,s}}{\|\Gamma(t)-\Gamma'(t)\|_{C(I_0, \mathcal{H}^{s})}^2} \bigg \}.
\ee
\end{enumerate}
\end{theorem}

The proof can be obtained as done in that of Theorem \ref{th:LocalWellposued-alpha>n/2}, based on the following inequality
\be
\| Q^{(k)} \gamma^{(k+2)} \|_{\mathrm{H}^{s}_k} \le C_{n, s} k \| \gamma^{(k+2)} \|_{\mathrm{H}^{s}_{k+2}},\quad \forall k \ge 1,
\ee
with $C_{n, s}>0$ being a constant depending only on $n$ and $s,$ which was proved in \cite[Theorem 4.3]{CP2011}.

For the case $s \le n/2$ we have

\begin{theorem}\label{th:LocalWellposuedQuintic2}
Assume that $n \ge 2$ and $s > \frac{n-1}{2}.$ Then, the Cauchy problem for the Gross-Pitaevskii hierarchy \eqref{eq:GPHierarchyEquaFunctQuintic}  is locally well posed in $\mathcal{H}^{s}.$ More precisely, there exist an absolute constant $A>2$ and a constant $C=M_{n, s}>0$ depending only on $n$ and $s$ such that
\begin{enumerate}[{\rm (1)}]

\item For every $\Gamma_0 = ( \gamma^{(k)}_{0} )_{k \geq 1} \in \mathcal{H}^{s},$ let $I = [- T, T]$ with $T = \frac{M_{n, s}}{\| \Gamma_0 \|^4_{\mathcal{H}^{s}}}.$ Then there exists a solution $\Gamma (t) = ( \gamma^{(k)} (t) )_{k \geq 1} \in C ( I, \mathcal{H}^{s})$ to \eqref{eq:GPHierarchyEquaFunctQuintic} with the initial data $\Gamma (0) = \Gamma_0$ such that
\begin{equation}\label{eq:SpacetimeEstimateQuintic}
\| \hat{Q} \Gamma (t) \|_{L^1_{t \in I} \mathcal{H}^{s}} \leq 2 A \| \Gamma_0 \|_{\mathcal{H}^{s}}.
\end{equation}

\item Given $I_0 = [- T_0, T_0]$ with $T_0 >0,$ if $\Gamma (t) \in C (I_0, \mathcal{H}^{s})$ so that $\hat{Q} \Gamma (t) \in L^1_{t \in I_0} \mathcal{H}^{s}$ is a solution to \eqref{eq:GPHierarchyEquaFunctQuintic} with the initial data $\Gamma (0) = \Gamma_0,$ then \eqref{eq:SpacetimeEstimateQuintic} holds true as well for $I = [-T, T],$ where
\be
T = \min \bigg \{ T_0, \; \frac{M_{n, s}}{ \| \hat{Q} \Gamma (t) \|^4_{L^1_{t \in I_0} \mathcal{H}^{s}} + \| \Gamma_0 \|^4_{\mathcal{H}^{s}} } \bigg \}.
\ee

\item Given $I_0 = [-T_0, T_0]$ with $T_0 >0,$ if $\Gamma (t)$ and $\Gamma' (t)$ in $C ( I_0, \mathcal{H}^{s})$ with $\hat{Q} \Gamma (t), \hat{Q} \Gamma' (t) \in L^1_{t \in I_0}\mathcal{H}^{s}$ are two solutions to \eqref{eq:GPHierarchyEquaFunctQuintic} with initial conditions $\Gamma (0) = \Gamma_0$ and $\Gamma' (0) = \Gamma'_0$ in $\mathcal{H}^{s},$ respectively, then
\begin{equation}\label{eq:InitialValueContinuousDependenceQuintic}
\| \Gamma (t) - \Gamma' (t) \|_{C( I, \mathcal{H}^{s})} \leq (1+ 2 A) \| \Gamma_0 - \Gamma'_0\|_{\mathcal{H}^{s}},
\end{equation}
with $I = [-T, T],$ where
\be
T = \min \bigg \{ T_0,\; \frac{M_{n, s}}{ \| \hat{Q} [\Gamma (t) - \Gamma' (t) ]\|^4_{L^1_{t \in I_0} \mathcal{H}^{s}} + \| \Gamma_0 - \Gamma'_0 \|^4_{\mathcal{H}^{s}} } \bigg \}.
\ee
\end{enumerate}
\end{theorem}

To prove this theorem, we need two preliminary results as follows.

\begin{lemma}\label{le:QOperatorEstimateQuintic}
Assume that $n\geq 2$ and $s> \frac{n-1}{2}.$ Then there exists a constant $C_{n,s}>0$ depending only on $n$ and $s$ such that, for any symmetric $\gamma^{(k+2)}\in {\mathcal S}({\mathbb R}^{(k+2)n}\times{\mathbb R}^{(k+2)n} ),$
\begin{equation}
\|Q^{(k)}_{j} e^{\mathrm{i} t\Delta^{(k+2)}}\gamma^{(k+2)}\|_{L_t^2({\mathrm H}_k^{s})}\leq C_{n,s}\|\gamma^{(k+2)}\|_{{\mathrm H}_{k+2}^{s}}
\end{equation}
for all $k\geq 1,$ where $j=1,2,\ldots,k.$

Consequently, $Q^{(k)}$ can be extended to the space ${\mathrm H}_{k+2}^{s}$ such that
\begin{equation}
\|Q^{(k)}e^{\mathrm{i} t\Delta^{(k+1)}}\gamma^{(k+2)}\|_{L_t^2({\mathrm H}_k^{s})}\leq C_{n,s}k\|\gamma^{(k+2)}\|_{{\mathrm H}_{k+2}^{s}}
\end{equation}
for all $\gamma^{(k+2)}\in {\mathrm H}_{k+2}^{s}.$
\end{lemma}

This lemma was proved first for $n=3$ in \cite{KM2008}, and then in \cite[Proposition A.1]{CP2010} for general case.

For any $\Gamma = (\gamma^{(k)}_t )_{k \ge 1}$ we define
\be\begin{split}
P_{k+2, j} (\Gamma) (t) : = & \int^t_0 d t_2 \int^{t_2}_0 d t_4 \cdots \int^{t_{2(j-1)}}_0 d t_{2 j} e^{\mathrm{i} (t - t_2) \triangle^{(k+2)}} \tilde{Q}^{(k+2)} \cdots \\
& \times e^{\mathrm{i} (t_{2(j-1)} - t_{2 j}) \triangle^{(k+ 2 j)}} \tilde{Q}^{(k+ 2 j)} e^{\mathrm{i} t_{2 j} \triangle^{(k+2(j+1))}} \gamma^{(k+2(j+1))}_{t_{2 j}}
\end{split}\ee
with the convention $t = t_0.$

The following lemma is crucial for the proof of Theorem \ref{th:LocalWellposuedQuintic2}.

\begin{lemma}\label{le:DuhamelEstimateQuintic}
Assume that $n\geq 2$ and $s> \frac{n-1}{2}.$  Then there exists an absolute constant $A>2$ and a constant $C_{n,s}$ depending only on $n$ and $s$ so that the estimates below hold
\begin{enumerate}[{\rm (1)}]

\item  For any $\Gamma_0 = ( \gamma_0^{(k)} )_{k\geq 1}\in \bigotimes_{k=1}^{\infty} {\mathrm H}_{k}^{s},$
\beq\label{eq:QoperatorEstimation1}
\|\tilde{Q}^{(k)} P_{k+2, j} ( e^{\mathrm{i} t \Delta} \Gamma_0) (t)\|_{L_{t\in I}^1{\mathrm H}_{k}^{s}}\leq kA^{k+j}(C_{n,s}T)^{\frac{j+1}{2}}\|\gamma^{(k+2j+2)}_0\|_{{\mathrm H}_{k+2j+2}^{s}}
\eeq
for $k,j\geq 1$ and $T>0,$ where $I =[-T, T].$

\item  For any $T>0$ and $\Gamma(t)= ( \gamma^{(k)}_t )_{k \geq 1}$  with $\gamma^{(k)}_t \in L_{t \in [-T,T]}^1{\mathrm H}_{k}^{s},$
\beq\label{eq:QoperatorEstimation2}
\begin{split}
\|\tilde{Q}^{(k)}& P_{k+2, m} (\Gamma )(t)\|_{L_{t \in I}^1{\mathrm H}_{k}^{s}}\\ &\leq kA^{k+m}(C_{n,s}T)^{\frac{m}{2}}\|Q^{(k+2m)}\gamma^{(k+2m+2)}(t)\|_{L_{t \in I}^1{\mathrm H}_{k+2m}^{s}},
\end{split}
\eeq
for $k,m\geq 1,$ where $I =[-T, T].$

\end{enumerate}
\end{lemma}

Now, by slightly repeating the proof of Theorem \ref{th:LocalWellposued-alpha>(n-1)/2} we can prove Theorem \ref{th:LocalWellposuedQuintic2}. The details are omitted.

\subsection*{Acknowledgement} This research was supported in part by NSFC under Grant No. 11171338, and by National Fundamental Research Program of China under Grant No. 2012CB922102.

\end{document}